\newtheorem{definition}{\textbf{Definition}}
\newtheorem{theorem}{\textbf{Theorem}}
\def\nameScheme{ShieldFL\xspace}
\begin{document}
\title{Privacy-Preserving Federated Learning Scheme with Mitigating Model Poisoning Attacks: Vulnerabilities and Countermeasures}
\author{Jiahui~Wu, Fucai~Luo, Tiecheng~Sun,~\IEEEmembership{Member,~IEEE}, Haiyan~Wang, Weizhe~Zhang,~\IEEEmembership{Senior Member,~IEEE}
\thanks{This work is supported by the Key Program of the Joint Fund of the National Natural Science Foundation of China (Grant No. U22A2036) and the Major Key Project of Peng Cheng Laboratory (Grant No. PCL2023A06).}
\thanks{J. Wu, T. Sun and H. Wang are with the New Network Department, Peng Cheng Laboratory, Shenzhen 518000, China; F. Luo is with the School of Computer Science and Technology, Zhejiang Gongshang University, Hangzhou 310018, China;
W. Zhang is with the School of Cyberspace Science Faculty of Computing, Harbin Institute of Technology, Shenzhen 518055, China and with the New Network Department, Peng Cheng Laboratory, Shenzhen 518000, China.
E-mail: wujh01@pcl.ac.cn; lfucai@126.com; tiechengsun@126.com; wanghy01@pcl.ac.cn; wzzhang@hit.edu.cn.
Corresponding author: W. Zhang.}}


\IEEEtitleabstractindextext{

\begin{IEEEkeywords}
Federated learning (FL), privacy protection, poisoning attacks, homomorphic encryption.
\end{IEEEkeywords}
}

\maketitle

\section{Introduction}

Federated learning (FL) is a new distributed learning paradigm, which enables collaborative model training among a group of distributed users and a central aggregation server, while keeping users' sensitive data local.  In an ideal scenario free from distrustful entities, FL empowers users to retain complete control over their data, thereby facilitating the development of high-quality machine learning models. However, in practice, both servers and users may lack absolute trustworthiness, leading to various attacks
\cite{hitaj2017deep,geiping2020inverting,yang2024fast,bhagoji2019analyzing,shi2024towards,li2022learning}
against the FL framework. In recent years, the US government has been actively promoting research aimed at creating robust and privacy-preserving FL solutions for diverse tasks
such as financial crime prevention \cite{PETsPrizeForFinancial} and pandemic response and forecasting \cite{PETsPrizeForPandemic}.
Incentives, including substantial prizes totaling $\$800,000$, have been offered to spur innovation in this area.
Over the years, extensive researches \cite{Chen2024Differentially,hu2024maskcrypt,ye2022one,chang2023privacy,zheng2022aggregation,2017Machine,2021FLTrust,krauss2024automatic,yan2024recess,lu2024depriving} have underscored the importance of two critical security properties in FL:
\begin{itemize}[leftmargin=0.3cm]
  \item \textit{Privacy protection of gradients:}
   In FL, the central server may recover sensitive information of the users' training datasets by analyzing the local gradients received from users
   \cite{hitaj2017deep,geiping2020inverting,yang2024fast}.
   To resist privacy leaks from gradients, privacy-preserving federated learning (PPFL) techniques  \cite{Chen2024Differentially,hu2024maskcrypt,ye2022one,chang2023privacy,zheng2022aggregation} are developed. These techniques involve users masking their local gradients using privacy protection methods before submitting them to the server for secure aggregation.

  \item \textit{Mitigation of poisoning gradients: }
    Malicious users may be manipulated by Byzantine adversaries to send poisonous local gradients, deviating from correct model training.
    The adversarial objective may involve forcing the model to classify inputs into specific incorrect classes (targeted attacks) \cite{bhagoji2019analyzing,shi2024towards} or inducing misclassifications without specifying a particular target class (untargeted attacks) \cite{li2022learning}. Such model poisoning attacks can severely compromise the model's performance. Recent works \cite{2021FLTrust,krauss2024automatic,yan2024recess,lu2024depriving}
    show that cosine similarity is an effective method to detect poisonous gradients.
\end{itemize}

The existing solutions for providing privacy and defending against poisoning attacks in FL focus on two opposing directions: privacy-preserving FL solutions aim to achieve data indistinguishability, whereas defenses against distorted gradients often rely on the similarity between malicious and benign gradients to filter out malicious ones. Specifically, defenses that extract valuable statistical information largely depend on the distinguishability of the data. Therefore, simultaneously achieving both security goals in FL remains challenging.
Recently, some efforts have been made to simultaneously achieve both objectives.
For instance,  researchers propose privacy-preserving and Byzantine-robust FL (PBFL) schemes.
However, these schemes often suffer from inefficiency \cite{lin2022ppbr,Robust2024Hao,ma2021pocket,kasyap2022efficient,xu2022mudfl,Abdel2022Privacy}, still make privacy leakages \cite{kasyap2022efficient,xu2022mudfl,Abdel2022Privacy,liu2021privacy, ma2022shieldfl,zhang2022lsfl}, or not secure for secure multiparty computation (MPC) scenarios like PPFL applications \cite{miao2022privacy,miao2024rfed}.
Based on server aggregation settings, we classify existing PBFL schemes into three categories: the serverless/decentralized model, the single-server model, and the two-server model.
Table~\ref{tab:relatedWork} presents a qualitative comparison of FL representative works with different server aggregation settings.
\begin{table}[]
\footnotesize
\setlength{\tabcolsep}{4pt}
 \centering
\caption{\small{Qualitative comparison of FL representative works.}}\label{tab:relatedWork}
\begin{tabular}{|c||c|c|c|c|}
\hline
\textbf{Schemes} & \begin{tabular}[c]{@{}c@{}}\textbf{Privacy}\\ \textbf{Protection}\end{tabular} & \begin{tabular}[c]{@{}c@{}}\textbf{Poisoning}\\ \textbf{Resilience}\end{tabular} & \begin{tabular}[c]{@{}c@{}}\textbf{Server}\\ \textbf{Setting}\end{tabular} & \textbf{Efficiency}                                          \\ \hline\hline
\textbf{FedSGD} \cite{mcmahan2017communication}               & \cellcolor[HTML]{FD6864}\XSolidBrush          & \cellcolor[HTML]{FD6864}\XSolidBrush            & \cellcolor[HTML]{FD6864}$1$                    & \cellcolor[HTML]{32CB00}\Checkmark   \\ \hline
\textbf{PrivateFL} \cite{Chen2024Differentially,hu2024maskcrypt,ye2022one,chang2023privacy,zheng2022aggregation} & \cellcolor[HTML]{32CB00}\Checkmark            & \cellcolor[HTML]{FD6864}\XSolidBrush            & \cellcolor[HTML]{FD6864}$0/1/2$                                & \cellcolor[HTML]{32CB00}\Checkmark   \\ \hline
\textbf{RobustFL} \cite{2017Machine,2021FLTrust,krauss2024automatic,yan2024recess,lu2024depriving}
                 & \cellcolor[HTML]{FD6864}\XSolidBrush          & \cellcolor[HTML]{32CB00}\Checkmark              & \cellcolor[HTML]{FD6864}$0/1/2$                                & \cellcolor[HTML]{32CB00}\Checkmark   \\ \hline
\textbf{PBFL} \cite{kasyap2022efficient,xu2022mudfl,Abdel2022Privacy} & \cellcolor[HTML]{FD6864}\XSolidBrush          & \cellcolor[HTML]{32CB00}\Checkmark              & \cellcolor[HTML]{FD6864}$0$                      & \cellcolor[HTML]{FD6864}\XSolidBrush \\ \hline
\textbf{PBFL} \cite{lin2022ppbr,Robust2024Hao,ma2021pocket}& \cellcolor[HTML]{32CB00}\Checkmark            & \cellcolor[HTML]{32CB00}\Checkmark              & \cellcolor[HTML]{FD6864}$1$                    & \cellcolor[HTML]{FD6864}\XSolidBrush \\ \hline
\textbf{PBFL} \cite{liu2021privacy, ma2022shieldfl,zhang2022lsfl,miao2022privacy,miao2024rfed}& \cellcolor[HTML]{FD6864}\XSolidBrush          & \cellcolor[HTML]{32CB00}\Checkmark              & \cellcolor[HTML]{32CB00}$2$                      & \cellcolor[HTML]{32CB00}\Checkmark   \\ \hline
\rowcolor[HTML]{32CB00}
\cellcolor[HTML]{FFFFFF}\textbf{Our scheme  }               & \Checkmark                                    & \Checkmark                                      & \cellcolor[HTML]{32CB00}$2$                      & \Checkmark                           \\ \hline
\end{tabular}
\begin{tablenotes}
  \footnotesize
   \item \textbf{Notes.} In the column of ``Server Setting", $0,1,$ and $2$ represent the server-less, the single-server, and the two-server models, respectively.
\end{tablenotes}
\end{table}

\textbf{In the serverless/decentralized model}, multiple users collaborate in training a model without any central server.
Blockchain techniques are commonly employed to resist model poisoning attacks in FL with the serverless setting (e.g., \cite{kasyap2022efficient,xu2022mudfl,Abdel2022Privacy}) since it provides a distributed and transparent framework for storing and managing model updates in FL, thereby reducing the risk of model poisoning attacks. However, these schemes bring additional latency due to the consensus process and result in exponentially increasing communication costs among clients, thereby diminishing the efficiency of FL. Moreover, these schemes encounter challenges in privacy protection because of the openness and transparency of the data stored on the blockchain, and the introduction of privacy protection technology will bring higher storage costs and may be limited by low transaction throughput.
\textbf{FL schemes with the single-server model} \cite{lin2022ppbr,Robust2024Hao,ma2021pocket} involve a central server in the gradient aggregation process.
These schemes typically employ privacy protection techniques based on multiple keys/parameters to ensure reliable privacy protection among users, such as local sensitivities of local differential privacy in \cite{lin2022ppbr} and individual keys of multi-key HE in \cite{Robust2024Hao,ma2021pocket}. Despite offering privacy protection, these schemes are often inefficient. Furthermore,
relying on a single server exposes it to potential single points of failure, leaving it vulnerable to attacks from external adversaries. These attacks could result in breaches of model privacy through data infringements, hacking, or leaks.

\textbf{FL schemes with the two-server model} (e.g., \cite{liu2021privacy, ma2022shieldfl, zhang2022lsfl, miao2022privacy, miao2024rfed}) distribute the trust assumed in the traditional single-server model across two non-colluding servers. These schemes are promising as they eliminate single points of failure and enhance robustness.
However, even state-of-the-art schemes such as ShieldFL contain a critical privacy vulnerability that can completely undermine their guarantees: the robust aggregation process leaks the relative relationships between clients' gradients. A single compromised client can leverage this leakage to reconstruct other clients' gradients in clear.
In ShieldFL \cite{ma2022shieldfl}, each client encrypts its gradient under a unique key, and two servers jointly perform decryption and aggregation. To ensure no server directly learns the gradients, ShieldFL masks each gradient with random noise before cosine similarity is computed for robust aggregation. However, the same noise is applied to all gradients and to the baseline gradient. This design preserves correctness after noise removal, but it also inadvertently reveals the relative offsets between gradients. If one client is malicious, it can subtract its own gradient and the shared noise from the noisy similarity results to infer the exact gradients of all other clients, thereby breaching the privacy of the entire system.
This example highlights a fundamental tension in existing two-server FL schemes: the need to preserve \textbf{computational correctness} (e.g., accurate robust aggregation) often conflicts with \textbf{data confidentiality}.
Recent works such as \cite{miao2022privacy, miao2024rfed} employ single-key homomorphic encryption (HE) to protect the confidentiality of data from multiple clients. While this approach is effective in some cases, it fails to guarantee confidentiality \textit{between} clients because all ciphertexts are encrypted under the same key. Conversely, schemes based on \textbf{multi-key HE} can protect inter-client confidentiality but face the challenge of performing correct computations across multiple keys.
For example, existing multi-key HE-based schemes like \cite{liu2021privacy, ma2022shieldfl, zhang2022lsfl}, while offering strong privacy protection, may compromise it in order to ensure computational correctness during model aggregation and poisoning defense.
In ShieldFL, although encryption and noise masking are employed to hide gradients, the robust aggregation mechanism still leaks information as explained above. This vulnerability means that once a client is compromised, the system¡¯s privacy guarantees collapse.

Therefore, despite advances in secure computation, existing two-server FL schemes still struggle to simultaneously achieve strong data confidentiality and computational correctness in the presence of model poisoning attacks.

In light of these challenges, this paper identifies weaknesses in PBFL schemes with the two-server model and proposes a solution.
The main contributions are threefold:

  (1) We analyze the security of the existing multi-key HE-based PBFL scheme based on the two-server setting and show that it violates privacy.
  Concretely, we present that the main secure computation procedure of the existing two-server-based scheme leaks significant information about the gradients to a semi-honest server.
  This vulnerability may allow the server to obtain clear gradients from all users, potentially compromising the confidentiality/privacy of the poisoning defense process.

  (2) We propose an enhanced scheme based on the two-server model to protect data privacy while bolstering defenses against model poisoning attacks.
  In our scheme,
  we devise a Byzantine-tolerant aggregation method to fortify against model poisoning attacks.
  Additionally, we develop an enhanced secure normalization judgment $\mathbf{ESecJudge}$ method and an enhanced secure cosine similarity measurement $\mathbf{ESecCos}$ method
  to protect the privacy of the defense process while ensuring the correctness of the computations.
  Our scheme guarantees privacy preservation and resilience against model poisoning attacks,
  even in scenarios with heterogeneous datasets that are non-IID (Independently Identically Distributed).
      Furthermore, it exhibits superior computational and communication efficiency.

  (3) Through experiments, we demonstrate the efficacy of our privacy attacks on the existing two-server-based scheme, the effectiveness of our defense against model poisoning attacks, and the efficiency of our scheme in terms of computational and communication costs.
      The experiment results also show that our scheme outperforms the state-of-the-art PBFL schemes.

The remainder of this paper is organized as follows.
In Section~\ref{sec:related}, we introduce the related PBFL schemes.
In Section~\ref{sec:ProblemFormulation}, we introduce the system model, the threat model, and the design goals.
In Section~\ref{sec:oriScheme}, we overview the two-server-based PBFL scheme \nameScheme and present our privacy attacks to \nameScheme.
To defeat our attacks and achieve design goals, an enhanced scheme is proposed in Section~\ref{sec:enhanced}.
Section~\ref{sec:theorem} analyzes the security and the complexity of the enhanced scheme.
Finally, conclusions are given in Section~\ref{sec:conclusion}. We give the necessary notations in Table~\ref{tab:notations}.

\begin{table}[t]
\scriptsize
\setlength{\tabcolsep}{2pt}
 \centering
 \renewcommand\arraystretch{1.23}
 \caption{The notations and their semantic meanings.}\label{tab:notations}
 \begin{tabular}{m{0.7cm}|m{3.3cm}||m{0.7cm}|m{3.3cm}}
  \hline
  \textbf{Nots.}   & \textbf{Meanings}& \textbf{Nots.}   & \textbf{Meanings}\\\hline \hline
  $\epsilon,\epsilon'$ & Security parameters & $p_1,p_2$ & Two odd primes\\
  \hline
  $p, q$ & Plaintext/ciphertext modulus & $P$ & Special modulus\\
  \hline
  $n_f$ & Ring dimension & $\mathcal{R}_q$ & Cyclotomic ring\\
  \hline
  $\mathbb{R}, \mathbb{C}$ & Real/complex number field & $\chi_s,\chi_e$ & Probability distribution\\ 
  \hline
  $pk,sk,$ $evk$ & Public/secret/evaluation key & $sk_i$ & Secret key shares\\
  \hline
  $[\![\cdot]\!],$ $[\cdot]_{i}$ & Ciphertext/partially decrypted ciphertext & $W^{(t)}$ & Model parameter, a flattened vector of all weight matrixes in a learning model\\
  \hline
  $\|\cdot\|$ & $l_2$ norm of a vector & $\cos$ & Cosine similarity\\
  \hline
  $co_i,$ $cs_i$ & Confidence/credit score of $U_i$ & $\eta$ & Learning rate of gradient descent method\\%
  \hline
  $g_i^{(t)},$ $g_{\Sigma}^{(t)}$ & Local/global gradient vector & $g_{*}^{(t)}$ &Poisonous gradient vector\\ 
  \hline
  $\alpha$ & Parameter of Dirichlet distribution & $Att_{ro}$ & Attack ratio: the proportion of malicious users to the total number of users\\
  \hline
 \end{tabular}
\end{table}

\vspace*{-0.5\baselineskip}

\section{Related Work}\label{sec:related}

\textbf{FL with the server-less/decentralized model.} 
Blockchain techniques are usually employed to defend against poisoning attacks in this model 
since it provides a distributed and transparent framework for storing and managing model updates in FL, thus reducing the risk of such attacks.
Kasyap et al. \cite{kasyap2022efficient} propose a new Blockchain consensus mechanism based on honest users' contributions on FL training,
incentivizing positive training behaviors and thereby reducing the likelihood of model poisoning attacks.
Xu et al. \cite{xu2022mudfl} propose recording an authenticator of the training program and global model on a Blockchain to resist model poisoning attacks.
This scheme allows FL users to verify the global models they receive by querying the authenticator.
Meanwhile, Mohamed et al. \cite{Abdel2022Privacy} design an incentive strategy that rewards reputation values to each participating user, records these values on a blockchain, and empowers users to choose users with high historical reputations for model aggregation.
These schemes preserve the integrity of the federated model against poisoning attacks, but they
introduce additional latency from the consensus process and incur exponentially increasing communication costs between the users, thereby reducing FL efficiency.
Moreover, these schemes still face privacy protection challenges due to the openness and transparency of blockchain-stored data. 
Introducing privacy protection technology may escalate storage costs and could be hindered by low transaction throughput.

\textbf{FL with single-server model.}
In general, existing researches on the single-server model achieve the secure computations of Byzantine attack detection and model aggregation by directly combining general privacy-enhancing techniques or infrastructures such as differential privacy (DP) \cite{rathee2023elsa,lin2022ppbr}, 
homomorphic encryption (HE) \cite{tang2023pile,Robust2024Hao,ma2021pocket}, secure multiparty computation (MPC) \cite{roy2022eiffel,zhang2023safelearning}, 
and hardware-based trusted execution environment (TEE) \cite{zhang2023agrevader}.
In these schemes, the different users masked their local gradients using either a unique parameter/key  (e.g., a sensitivity of global DP in \cite{rathee2023elsa} and a secret key of single-key HE in \cite{tang2023pile}) or using their individual parameters/keys  (e.g., local sensitivities of local DP in \cite{lin2022ppbr} and individual keys of multi-key HE \cite{Robust2024Hao,ma2021pocket}).
The latter approach provides more reliable privacy protection but may suffer from high performance degradation, including increased communication/computation burdens and lossy model accuracy, when dealing with multi-parameter/key masked gradients.
Besides, the single server is susceptible to single points of failure.

\textbf{FL with the multi-server model.}
FL with the multi-server model, particularly the two-server model, gains attention in recent research on PBFL. In this model, users distribute their protected gradients across multiple non-colluding servers for secure aggregation. This approach mitigates single points of failure and, compared to single-server and server-less models, reduces the computational and communication burden on users without compromising model performance. While protocols with more than two servers can offer better efficiency, ensuring the validity of the honest-majority assumption remains challenges. Conversely,
for dishonest-majority protocols, two-server counterparts bring more performance improvements than more servers (see, e.g., comparison in Table 6 of \cite{braun2022motion}).
Liu et al. \cite{liu2021privacy} propose PEFL scheme based on the two-server model.
It employs linear HE to protect the privacy of local model updates and designs the corresponding secure similarity computation protocols between the two servers to detect poisonous gradient ciphertexts.
However, these secure computation protocols inadvertently disclose data privacy to the servers \cite{schneider2023comments}.
Zhang et al. \cite{zhang2022lsfl} propose LSFL scheme, which utilizes additive secret sharing to design a secret-sharing-based, secure Byzantine-robust aggregation protocol between the two servers.
This protocol enables poison detection and secure aggregation.
However, this scheme also suffers from privacy leakages during the implementation of secure aggregation between the two servers \cite{wu2023security}.
Ma et al. \cite{ma2022shieldfl} propose \nameScheme, a two-server model employing two-trapdoor HE and cosine similarity for encrypting gradients and detecting model poisoning. In \nameScheme, the secret key of the encryption algorithm is divided into two secret shares and distributed to two non-colluding servers separately. These servers then collaborate to detect poisonous encrypted gradients and obtain an aggregation gradient using a masking technique.
However, this scheme still violates privacy.
The privacy breaches of these two-server-based PBFL schemes arise from the secure computation processes of similarity measurements between the two servers. 
Specifically, the computation involves securely computing the similarity between a local gradient and a baseline gradient (e.g., the median gradient of all gradients in \cite{zhang2022lsfl} or the poisonous gradient baseline in \nameScheme), with the baseline gradient being identical across multiple computations. This identical baseline gradient makes it easy to design insecure schemes like \nameScheme and LSFL, which compromise privacy.

In this paper, we focus on PBFL schemes utilizing the two-server setting. These schemes fortify resilience against single points of failure and have the potential to improve FL performance regarding computational cost, communication burden, and final model accuracy. While Ma et al. \cite{ma2022shieldfl} and Zhang et al. \cite{zhang2022lsfl} have designed their PBFL schemes based on the two-server setting, their approaches unfortunately fall short in preserving privacy. Consequently, our scheme aims to rectify these privacy breaches and further improve computational and communication efficiency.

\section{Problem Formulation}\label{sec:ProblemFormulation}

\subsection{System Model}

Fig.~\ref{fig:system} illustrates the system model of our FL framework, comprising three types of entities: 
a key center \textit{KC},
$n$ users $U_{i\in[1,n]}$, 
and two servers $S_1$ and $S_2$.
\begin{itemize}[leftmargin=0.3cm]
  \item \textit{Key Center:} \textit{KC} generates and distributes public key and secret key shares to users and servers.

  \item \textit{Users:} In the $t$th training iteration, each user $U_{i}$ trains the local model $W^{(t)}$ with the individual dataset, encrypts its local gradient $g_i^{(t)}$ with the public key, and uploads the encrypted gradient $[\![g_i^{(t)}]\!]$ to $S_1$. Then, each user downloads the global gradient $g^{(t)}$ for local update.

  \item \textit{Servers:} $S_1$ and $S_2$ interact to detect poisoning of the users' local gradients $\{[\![g_i^{(t)}]\!]\}_{i\in[1,n]}$ and perform aggregating of the local gradients. 
\end{itemize}

\begin{figure}[t]
  \centering
  \includegraphics[width=8cm]{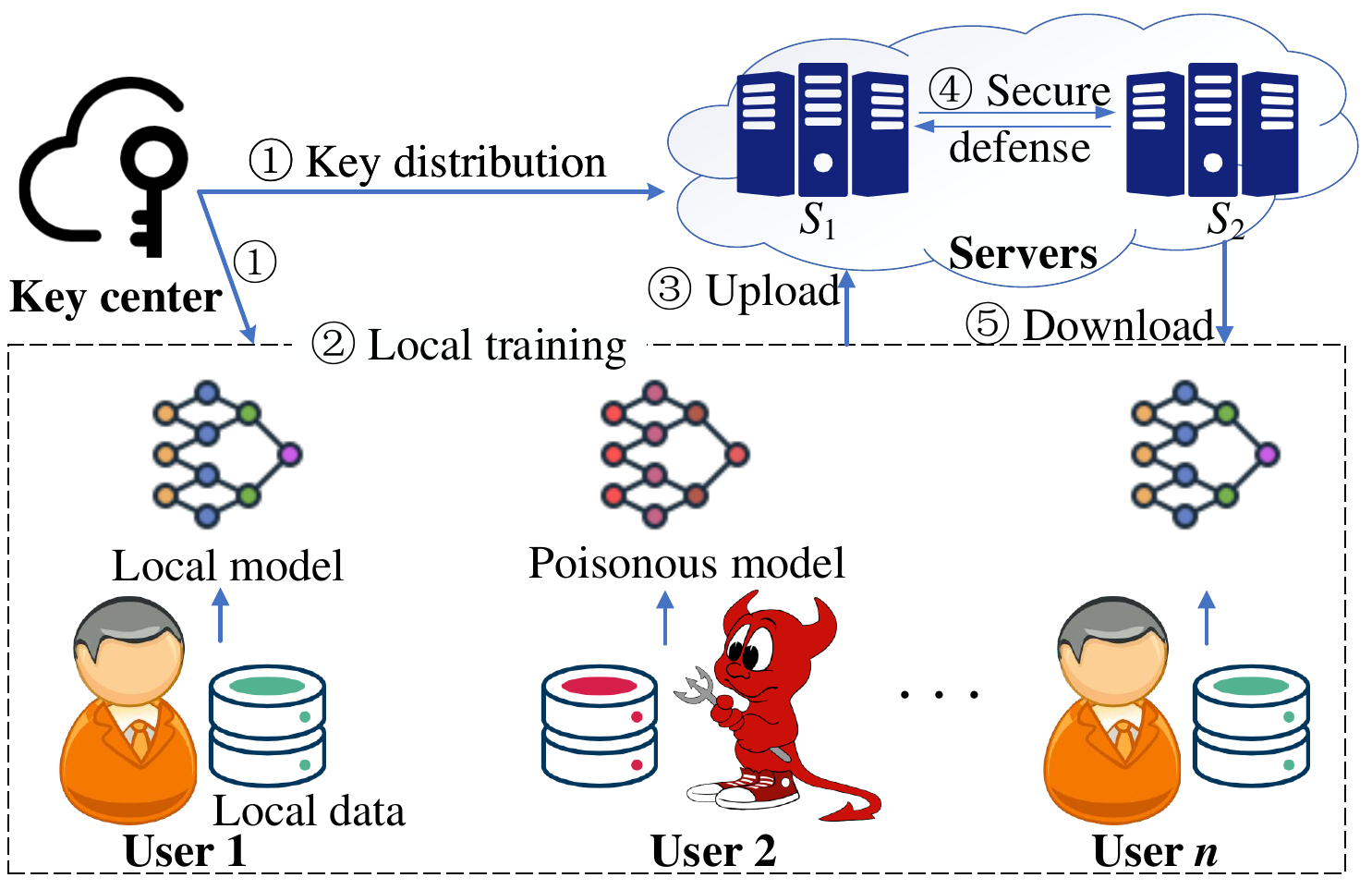}
  \caption{The system model}\label{fig:system}
\end{figure}

\subsection{Threat Model}

We consider a federated learning system comprising a set of $n$ clients $\{U_i\}_{i\in[1,n]}$ and two non-colluding servers $S_1,S_2$. The servers collaboratively manage encrypted aggregation and filtering tasks, while the clients periodically upload their local updates. We assume the presence of \textit{malicious clients} and at most \textit{one corrupted server}, a widely accepted assumption in dual-server frameworks~\cite{rathee2023elsa,gehlhar2023safefl}.
To capture different adversarial behaviors, we distinguish between \textit{honest-but-curious} and \textit{malicious} corruption types:
\begin{itemize}[leftmargin=0.3cm]
  \item \textbf{Honest-but-curious (semi-honest)} parties follow the protocol specification but attempt to infer private information from the data they receive.
  \item \textbf{Malicious} parties may arbitrarily deviate from the protocol, drop or fabricate messages, or tamper with the computation.
\end{itemize}

We analyze the following two adversarial scenarios:

\textbf{Case 1: Semi-honest server with malicious clients.}
In this setting, both servers behave semi-honest, while some clients may be malicious.
The semi-honest servers does not deviate from the protocol but attempts to learn sensitive information. Under this threat model, our scheme ensures:
\begin{itemize}[leftmargin=0.3cm]
  \item The \textit{privacy} of honest clients' local updates (e.g., gradients or model parameters) is preserved.
  \item The \textit{correctness} of server-side computations, including aggregation and filtering, is guaranteed.
  \item Malicious clients cannot compromise the privacy of others or inject invalid influence beyond their own inputs.
\end{itemize}

\textbf{Case 2: Malicious server with malicious clients.}
In this stronger setting, one of the servers may behave maliciously, while the other remains honest-but-curious. The corrupted server may alter, drop, or fabricate computation results. Nevertheless, as long as the two servers do not collude, our scheme provides:
\begin{itemize}[leftmargin=0.3cm]
  \item \textit{Privacy protection} of honest clients' updates remains intact due to cryptographic separation and dual control.
  \item However, the \textit{correctness of computation} cannot be unconditionally guaranteed, as the malicious server may tamper with intermediate results or deviate from the protocol.
\end{itemize}

This trade-off is common in dual-server designs~\cite{rathee2023elsa,gehlhar2023safefl}. Similar to SAFEFL~\cite{gehlhar2023safefl}, our framework is compatible with integrating \textit{verifiable computation techniques} (e.g., ZKP, MAC-based audits) to ensure correctness even in the presence of a malicious server. As verifiability introduces additional complexity and is orthogonal to our core design, we leave its implementation as future work.

\subsection{Design Goals}

The design goals are to ensure security and performance as detailed below.
\begin{itemize}[leftmargin=0.3cm]
\item \textbf{Confidentiality (Privacy Protection).} The scheme should be resistant to Type I attackers, ensuring that users' private datasets cannot be reconstructed by this type of attackers (including other users and servers).
\item \textbf{Robustness.} The scheme should be resistant to Type II attackers. It should be capable of detecting model poisoning attacks from malicious users based on users' encrypted local models and should be able to produce a high-quality global model.
\item \textbf{Efficiency.} The computational and communication overhead on users and servers should be minimized.
\end{itemize}

\section{Overview of Prior Work}\label{sec:oriScheme}
In this section, we briefly review the existing two-server-based PBFL scheme, \nameScheme~\cite{ma2022shieldfl}, and identify its security vulnerabilities.
\nameScheme employs a two-trapdoor partial homomorphic encryption (PHE) scheme based on Paillier's cryptosystem to preserve data privacy.
The specific construction of the two-trapdoor PHE is provided in Appendix~A.
\nameScheme consists of the following four processes: 

  1) \textit{Setup}:
      In this process, $KC$ generates a public/secret key pair $(pk,sk)$ and divides $sk$ into $n+1$ pairs of independent shares $(sk_1,sk_2)$, $\{(sk_{s_i},sk_{u_i})\}_{i\in[1,n]}$
      using $\mathbf{PHE.KeySplit}(sk)$.
      It then broadcasts \(pk\), distributes \(sk_1\) and \(sk_2\) to \(S_1\) and \(S_2\), respectively, and sends \(\{sk_{s_i}\}_{i\in[1,n]}\) and \(sk_{u_i}\) to \(S_1\) and each user \(U_i\), respectively.
      
  2) \textit{Local Training}: Each benign user $U_i$ trains its local model $W^{(t)}$, encrypts its local gradient $g_i^{(t)}$, and sends the encrypted gradient $[\![g_i^{(t)}]\!]\leftarrow\mathbf{PHE.Enc}_{pk}(g_i^{(t)})$ to $S_1$.
  A malicious user $U_i\in U^*$ may launch model poisoning and send encrypted poisonous gradients $[\![g_{i*}^{(t)}]\!]$ to $S_1$.

  3) \textit{Privacy-Preserving Defense}:
  This process mainly consists of three implementations.
  a) Normalization judgment. $S_1$ determines whether it received encrypted local gradients $[\![g_i^{(t)}]\!]$ are normalized.
  b) Poisonous Gradient Baseline Finding. $S_1$ finds the gradient $[\![g_*^{(t)}]\!]$ (among the users' local gradients) with the lowest cosine similarity as the baseline of poisonous gradients through implementing a secure cosine similarity measurement $\cos_{i\Sigma}\leftarrow\mathbf{SecCos}([\![g_i^{(t)}]\!],[\![g_{\Sigma}^{(t-1)}]\!])$ in Fig.~\ref{fig:SecCos}, where $[\![g_{\Sigma}^{(t-1)}]\!]$ is the aggregated gradient of the last iteration.
  c) Byzantine-Tolerance Aggregation.
     First, $S_1$ measures the cosine similarity between the poisonous baseline and each user's local gradient as $\cos_{i*}\leftarrow\mathbf{SecCos}([\![g_i^{(t)}]\!],[\![g_*^{(t)}]\!]).$
     Then, $S_1$ sets the confidence $co_i\leftarrow 1\times deg-\cos_{i*}$ for $g_i^{(t)}$
     and calculates $co\leftarrow\sum_{i=1}^nco_i, co_i=\lfloor\dfrac{co_i}{co}\cdot deg\rceil$. 
     Finally, $S_1$ generates a global gradient ciphertext $[\![g_{\Sigma}^{(t)}]\!]$,
     where $g_{\Sigma}^{(t)}=\sum_{i=1}^n co_i g_i^{(t)}$.

      4) \textit{Model Update}.
      $S_1$ and $U_i$ interact to perform $\mathbf{PHE.PartDec}$ and $\mathbf{PHE.FullDec}$, so as $U_i$  obtains the global gradient $g_{\Sigma}^{(t)}$ and
      updates its local model as $W^{(t+1)}\leftarrow W^{(t)}-\eta^{(t)}g_{\Sigma}^{(t)}$, where $\eta^{(t)}$ is the learning rate of the $t$th training iteration.

\begin{fig}[t]
\small
Given two normalized ciphertexts $[\![g_a]\!]=\{[\![x_{a_1}]\!],\cdots,[\![x_{a_l}]\!]\}$ and $[\![g_b]\!]=\{[\![x_{b_1}]\!],\cdots,[\![x_{b_l}]\!]\}$, $S_1$ interacts with $S_2$ to obtain the cosine similarity $\cos_{ab}$ of $g_a$ and $g_b$: 

\begin{itemize}[leftmargin=0.4cm]
  \item[$\bullet$] $@S_1$: On receiving $[\![g_a]\!]$ and $[\![g_b]\!]$, $S_1$ selects $l$ random noises $r_{i\in[1,l]}\leftarrow \mathbb{Z}_N^*$ and masks $[\![g_a]\!]$ and $[\![g_b]\!]$ as
      $[\![\bar{g}_a]\!]=\{[\![\bar{x}_{a_1}]\!],\cdots,[\![\bar{x}_{a_l}]\!]\}$ and $[\![g_b]\!]=\{[\![\bar{x}_{b_1}]\!],\cdots,[\![\bar{x}_{b_l}]\!]\}$, where $[\![\bar{x}_{a_i}]\!]=[\![x_{a_i}]\!]\cdot [\![r_i]\!]$ and $[\![\bar{x}_{b_i}]\!]=[\![x_{b_i}]\!]\cdot [\![r_i]\!]$.
      Then, $S_1$ partially decrypts the masked ciphertexts as
      $[\bar{x}_{a_i}]_1\leftarrow \mathbf{PHE.PartDec}_{sk_1}([\![\bar{x}_{a_i}]\!])$ and $[\bar{x}_{b_i}]_1\leftarrow \mathbf{PHE.PartDec}_{sk_1}([\![\bar{x}_{b_i}]\!]), s.t., [\![\bar{x}_{a_i}]\!]\in[\![\bar{g}_a]\!],[\![\bar{x}_{b_i}]\!]\in[\![\bar{g}_b]\!]$,
      and sends $[\bar{g}_a]_1,[\bar{g}_b]_1,[\![\bar{g}_a]\!],[\![\bar{g}_b]\!]$ to $S_2$.

\item[$\bullet$] $@S_2$: Once obtaining the above ciphertexts, $S_2$ calls $\mathbf{PHE.PartDec}$ and $\mathbf{PHE.FullDec}$ to perform full decryption and obtain the masked gradients as  $\bar{g}_a$ and $\bar{g}_b$. Then $S_2$ calculates the cosine similarity between $\bar{g}_a$ and $\bar{g}_b$ as
    $\overline{\cos}_{ab}=\bar{g}_a\odot\bar{g}_b=\sum_{i=1}^m\bar{x}_{a_i}\cdot\bar{x}_{b_i}$, where $\odot$ represents inner product.
    $S_2$ calls the encryption algorithm $\mathbf{PHE.Enc}$ to return $[\![\overline{\cos}_{ab}]\!]$ to $S_1$.

\item[$\bullet$] $@S_1$: $S_1$ removes the noise $r_i$ in $[\![\overline{\cos}_{ab}]\!]$ to obtain the final cosine similarity $[\![\cos_{ab}]\!]$ and sends it to $S_2$.

\item[$\bullet$] $@S_2$: $S_2$ computes the decryption share $[\cos_{ab}]_2\leftarrow\mathbf{PHE.PartDec}_{sk_2}([\![\cos_{ab}]\!])$ and returns it to $S_1$.

\item[$\bullet$] $@S_1$: On receiving $[\cos_{ab}]_2$, $S_1$ obtains the cosine similarity plaintext $\cos_{ab}$ by calling $\mathbf{PHE.PartDec}$ and $\mathbf{PHE.FullDec}$.
\end{itemize}
 \setcounter{fig}{\value{figure}}
 \caption{Implementation procedure of $\mathbf{SecCos}([\![g_a]\!],[\![g_b]\!])$.}
 \label{fig:SecCos}
\end{fig}


%

\subsection{Privacy Leakage in \nameScheme}
In \nameScheme, privacy leakages primarily occur during the privacy-preserving defense process. 
%
In the procedure of poisonous gradient baseline finding, $S_1$ interacts with $S_2$ to measure the secure cosine similarity $\cos_{i\Sigma}\leftarrow\mathbf{SecCos}([\![g_i^{(t)}]\!] ,[\![g_{\Sigma}^{(t-1)}]\!] )$ for $i\in[1,n]$, so as to find the baseline of poisonous gradient $[\![g_*^{(t)}]\!]\in\{[\![g_i^{(t)}]\!]\}_{i\in[1,n]}$ with the lowest cosine similarity.
Specifically, in this procedure, $S_1$ selects $l$ random noises $\{r_{i1},\cdots,r_{il}\}$ and
masks $[\![g_i^{(t)}]\!]=\{[\![x_{i1}]\!],\cdots,[\![x_{il}]\!]\}$ and $[\![g_{\Sigma}^{(t-1)}]\!]=\{[\![y_{1}]\!],\cdots,[\![y_{l}]\!]\}$ as $[\![\bar{g}_i^{(t)}]\!]$ and $[\![\bar{g}_{\Sigma,i}^{(t)}]\!]$ using Eq.~\eqref{eq:maskG} and Eq.~\eqref{eq:maskLastG}, respectively.
Then $S_1$ partially decrypts $[\![\bar{g}_i^{(t)}]\!]$ and $[\![\bar{g}_{\Sigma,i}^{(t)}]\!]$ as $[\bar{g}_i^{(t)}]_1$ and $[\bar{g}_{\Sigma,i}^{(t)}]_1$ and sends $[\![\bar{g}_i^{(t)}]\!],[\![\bar{g}_{\Sigma,i}^{(t)}]\!],[\bar{g}_i^{(t)}]_1,[\bar{g}_{\Sigma,i}^{(t)}]_1$ to $S_2$.

\vspace{-5pt}
{\small
\begin{align}
&\left(
 \begin{array}{c}
    [\![\bar{g}_1^{(t)}]\!] \\
    \vdots \\
    {[\![\bar{g}_i^{(t)}]\!]} \\
    \vdots \\
    {[\![\bar{g}_n^{(t)}]\!]} \\
 \end{array}
\right)  = 
\left(
 \begin{array}{ccc}
   [\![x_{11}]\!]\cdot [\![r_{11}]\!] & \cdots & [\![x_{1l}]\!]\cdot [\![r_{1l}]\!] \\
   \vdots & \cdot &\vdots \\
   {[\![x_{i1}]\!]\cdot [\![r_{i1}]\!]} & \cdots & [\![x_{il}]\!]\cdot [\![r_{il}]\!]\\
   \vdots & \cdot &\vdots \\
   {[\![x_{n1}]\!]\cdot [\![r_{n1}]\!]} & \cdots & [\![x_{nl}]\!]\cdot [\![r_{nl}]\!]\\
 \end{array}
\right)\label{eq:maskG}\\
&\left(
 \begin{array}{c}
   [\![\bar{g}_{\Sigma,1}^{(t-1)}]\!] \\
   \vdots \\
   {[\![\bar{g}_{\Sigma,i}^{(t-1)}]\!]} \\
   \vdots \\
   {[\![\bar{g}_{\Sigma,n}^{(t-1)}]\!]} \\
 \end{array}
\right) = 
\left(
 \begin{array}{ccccc}
   [\![y_1]\!]\cdot [\![r_{11}]\!] & \cdots & [\![y_l]\!]\cdot [\![r_{1l}]\!] \\
   \vdots & \cdot &\vdots \\
   {[\![y_1]\!]\cdot [\![r_{i1}]\!]} & \cdots & [\![y_l]\!]\cdot [\![r_{il}]\!]\\
   \vdots & \cdot &\vdots \\
   {[\![y_1]\!]\cdot [\![r_{n1}]\!]} & \cdots & [\![y_l]\!]\cdot [\![r_{nl}]\!]\\
 \end{array}
\right)\label{eq:maskLastG}
\end{align}}
\vspace{-5pt}

\textbf{Privacy Leakage:} 
$S_2$ decrypts the ciphertexts received from $S_1$ to obtain $\bar{g}_{i}^{(t)}$ and $\bar{g}_{\Sigma,i}^{(t-1)}$ $(i\in[1,n])$ as follows:


\vspace{-5pt}
{\small
\begin{align}
&\left(
 \begin{array}{c}
    \bar{g}_1^{(t)} \\
    \vdots \\
    {\bar{g}_i^{(t)}} \\
    \vdots \\
    {\bar{g}_n^{(t)}} \\
 \end{array}
\right)  =
\left(
 \begin{array}{ccc}
   x_{11}+ r_{11} & \cdots & x_{1l}+r_{1l} \\
   \vdots &  \cdot &\vdots \\
   {x_{i1}+r_{i1}} & \cdots & x_{il}+r_{il}\\
   \vdots & \cdot &\vdots \\
   {x_{n1}+r_{n1}} & \cdots & x_{nl}+r_{nl}\\
 \end{array}
\right)\label{eq:FullDecG}\\
&\left(
 \begin{array}{c}
   \bar{g}_{\Sigma,1}^{(t-1)} \\
   \vdots \\
   {\bar{g}_{\Sigma,i}^{(t-1)}} \\
   \vdots \\
   {\bar{g}_{\Sigma,n}^{(t-1)}} \\
 \end{array}
\right) =
\left(
 \begin{array}{ccc}
   y_1+r_{11} & \cdots & y_l+r_{1l} \\
   \vdots & \cdot &\vdots \\
   {y_1+r_{i1}} & \cdots & y_l+r_{il}\\
   \vdots & \cdot &\vdots \\
   {y_1+r_{n1}} & \cdots & y_l+r_{nl}\\
 \end{array}
\right)\label{eq:FullDecLastG}
\end{align}
\vspace{-5pt}

\noindent Then, $S_2$ can learn the difference between two arbitrary local gradients and the difference between the last-iteration aggregated gradient and an arbitrary local gradient: 
\begin{equation}\label{eq:dij}
  \begin{split}
    d_{ij}
    &=g_i^{(t)}-g_j^{(t)}
    =\{x_{i1}-x_{j1},\cdots,x_{il}-x_{jl}\}\\
    &=\left(\bar{g}_i^{(t)}-\bar{g}_j^{(t)}\right)-\left(\bar{g}_{\Sigma,i}^{(t-1)}-\bar{g}_{\Sigma,j}^{(t-1)}\right), i\neq j.\\
  \end{split}
\end{equation}
%
\begin{equation}\label{eq:diSigma}
  \begin{split}
    d_{i\Sigma}
    &=g_i^{(t)}-g_{\Sigma}^{(t-1)}
    =\{x_{i1}-y_{1},\cdots,x_{il}-y_{l}\}\\
    &=\bar{g}_i^{(t)}-\bar{g}_{\Sigma,i}^{(t-1)}.
  \end{split}
\end{equation}
Thus, according to Eqs.  \eqref{eq:dij} and \eqref{eq:diSigma}, $S_2$ knows
\begin{equation}\label{eq:differenceG}
\small
\left(
 \begin{array}{c}
    g_1^{(t)} \\
    \vdots \\
    {g_i^{(t)}} \\
    \vdots \\
    {g_n^{(t)}} \\
 \end{array}
\right)  =
\left(
 \begin{array}{c}
   d_{1j}+g_j^{(t)}\\
   \vdots\\
   {d_{ij}+g_j^{(t)}}\\
   \vdots \\
   {d_{nj}+g_j^{(t)}}\\
 \end{array}
\right)  =
\left(
 \begin{array}{c}
   d_{1\Sigma}+g_{\Sigma}^{(t-1)}\\
   \vdots\\
   {d_{i\Sigma}+g_{\Sigma}^{(t-1)}}\\
   \vdots \\
   {d_{n\Sigma}+g_{\Sigma}^{(t-1)}}\\
 \end{array}
\right),
\end{equation}
in which
$S_2$ acquires a ``shifted" distribution of all local gradients in plaintext (It is worth noting that $\{d_{ij}\}_{i\in[1,n],j\in[1,n]}$ and $\{d_{i\Sigma}\}_{i\in[1,n]}$ are all accessible to $S_2$). This compromises privacy significantly \cite{geiping2020inverting}, 
as it divulges a considerable amount of information to $S_2$, thereby failing to fulfill the confidentiality requirement outlined in \cite{ma2022shieldfl}.

The privacy leakage issue in the Byzantine-tolerant aggregation phase is similar to that in the poisonous gradient baseline finding process. 
For brevity, we present its details in Appendix~B. 
In addition, Appendix~B further provides a gradient reconstruction analysis, showing how the leaked information from both phases can be used to recover the complete plaintext gradients.

\subsection{Immediate Fix and Its Ineffectiveness}\label{ssec:immediateFix}


The main reason for the privacy breaches in \nameScheme is that 
the local gradients $g_i^{(t)}$ and the last-iteration aggregated gradient $g_{\Sigma}^{(t-1)}$
(or the poisonous gradient baseline $g_*^{(t)}$) are masked
with the same random noise vector, 
thereby exposing the correlations between these gradients and resulting in the severe consequence that once one of these gradients is leaked, all other gradients can be directly revealed in clear.

Although different random noise vectors can be used 
to prevent the aforementioned privacy violations, using the same noise vector is inevitable to maintain the correctness of 
$\mathbf{SecCos}$.
%
In detail, the secure cosine similarity measurement between two gradient ciphertexts $[\![g_a]\!]$ and $[\![g_b]\!]$ is calculated by first masking the two gradients with random noise vectors $r$ and $\gamma$ to output $[\![\bar{g}_a]\!]=[\![g_a+r]\!]$ and $[\![\bar{g}_b]\!]=[\![g_b+\gamma]\!]$, respectively, then decrypting the masked gradients and calculating the cosine similarity
$\overline{\cos}_{ab}=\frac{\bar{g}_a\odot\bar{g}_b}{|\!|\bar{g}_a|\!|\cdot |\!|\bar{g}_b|\!|}=\frac{(g_a+r)\odot(g_b+\gamma)}{|\!|g_a+r|\!|\cdot |\!|g_b+\gamma|\!|}$.
However, $\frac{\bar{g}_a\odot\bar{g}_b}{|\!|\bar{g}_a|\!|\cdot |\!|\bar{g}_b|\!|}= \frac{g_a\odot g_b}{|\!|g_a|\!|\cdot |\!|g_b|\!|}$ holds only when $r=\gamma$, where $\frac{g_a\odot g_b}{|\!|g_a|\!|\cdot |\!|g_b|\!|}$ is the correct computation of the cosine similarity between $g_a$ and $g_b$; Otherwise, if $r\neq\gamma$, the correctness will be breached.

To prevent privacy violations and ensure the correctness of computations, we deviate from \nameScheme of employing distinct random noise vectors to obscure gradient ciphertexts.
Instead, we introduce a novel method where we mask the computation result of gradient ciphertexts with a random noise vector, as detailed in our new secure computation methods outlined in Section \ref{ssec:NSCM}. This approach leverages a newly constructed two-trapdoor Fully Homomorphic Encryption (FHE) algorithm as the fundamental cryptographic building block.




%

\section{The Enhanced Scheme}\label{sec:enhanced}

In this section, we propose an enhanced PBFL scheme based on the two-server setting and effectively addresses the privacy leakages identified in \nameScheme.
Additionally, the enhanced scheme demonstrates improved efficiency in both computational and communication performance.

\subsection{Technical Intuition}

\textbf{We empower our enhanced scheme to resist privacy attacks while ensuring computation correctness.}
In \nameScheme's secure computation methods, the server $S_1$ applies a masking technique to the received user gradient ciphertexts.
Then $S_1$ and $S_2$ interact to compute on the masked ciphertexts, and finally $S_2$ partially decrypts the computation results of the masked ciphertexts and sends them back to $S_1$. $S_1$ obtains the masked computation results and demasks to acquire the final computation plaintext output.
Although the masking technique is highly efficient, ensuring the correctness of computation requires that all ciphertexts be masked using the same random noise. This introduces strong correlations among masked values: if any plaintext is leaked, all correlated gradients can be inferred, thereby violating privacy.
A straightforward solution, using independent noise for each ciphertext, would eliminate these correlations but compromise computation correctness.

To simultaneously ensure both security and correctness, we fundamentally redesign the secure computation paradigm in our enhanced scheme.
Specifically, we introduce a novel masking protocol build upon a newly constructed two-trapdoor FHE scheme.
Instead of masking the inputs, we mask the computation results using different random vectors. This eliminates the need for shared noise and preserves correctness.
To handle ciphertext operations under masking, we classify them into linear (e.g., addition, subtraction, scalar multiplication) and non-linear (e.g., multiplication, exponentiation) categories. Linear operations on masked ciphertexts yield correctly demaskable results. In contrast, non-linear operations introduce complex noise that is difficult to eliminate.
To mitigate this without adding computational overhead, we perform non-linear operations directly on the original ciphertexts at $S_1$. The results are then masked and sent to $S_2$, which performs the remaining linear operations and partial decryption. Finally, $S_1$ removes the mask to obtain the correct plaintext output.
In this design, $S_1$ only observes the final computation results, while $S_2$ only accesses masked intermediate values. This separation effectively defends against the proposed privacy attacks.

Building on the technical intuition outlined above, we design two secure computation methods between the servers $S_1$ and $S_2$: an enhanced secure normalization judgment and an enhanced secure cosine similarity measurement, as introduced in Section~\ref{ssec:NSCM}. Notably, our design principle can also be extended to develop other secure computation methods, such as secure Euclidean distance, secure mean, and secure median, by separating the computation into linear and non-linear operations and employing our encryption and masking techniques within the two-server interaction framework.

\textbf{We improve both computational and communication efficiency in the enhanced scheme.}
This leap in efficiency stems from a couple of key insights:

(1) \textit{Ciphertext computation simplicity:} In our enhanced scheme, the secure computation process of non-linear operations on ciphertexts -- crucial for operations like computing ${\cos}_{ab} = g_a \odot g_b$ during $\mathbf{SecCos}([\![g_a]\!],[\![g_b]\!])$ -- is streamlined. Unlike previous methods that relied on Paillier's addition homomorphism and complex masking/unmasking techniques, requiring multiple interactions between the two servers, we construct a two-trapdoor FHE that allows for direct non-linear operations on ciphertexts.
This direct approach significantly reduces both the number of computation steps and the communication rounds, making our secure computation methods more efficient than those in \nameScheme.
For example, our secure computation methods (outlined in Figs.~\ref{fig:NewSecJudge} and \ref{fig:NewSecCos}) only require two rounds of interaction between $S_1$ and $S_2$, whereas the original scheme requires four rounds, reducing the communication burden by (at least) half.

(2) \textit{Cryptographic primitive efficiency:}
We employ an RLWE-based FHE, specifically the CKKS cryptosystem, to construct our two-trapdoor FHE scheme.
This choice is more efficient than the Paillier cryptosystem-based two-trapdoor PHE. The RLWE-based system, with its smaller ciphertext modulus requirements, offers lower computational overhead and communication traffic per step. For instance, at a $128$-bit security level, the two-trapdoor PHE demands a ciphertext modulus $N^2$ of at least $3072\times2$ bits, whereas the RLWE-based FHE can operate with a much smaller $q$ of $56$ bits (with $n_f=2048$), significantly reducing the computational cost.
We provide a detailed performance comparison in Section~\ref{ssec:performanceComp}.

(3) \textit{No additional trade-offs:}
Importantly, these improvements do not introduce extra limitations. Our enhanced scheme retains the same flexibility and scalability as \nameScheme because both adopt a two-trapdoor architecture and the same key distribution mechanism.
Although CKKS introduces negligible approximation error due to its encoding and encryption, such error is naturally tolerated by stochastic machine learning models and does not impact convergence or accuracy, as verified by our experimental results in Section~\ref{sec:experiment}.

In summary, our enhanced scheme introduces significant improvements in both efficiency and security without sacrificing flexibility, accuracy, or scalability.
By employing randomly selected noise for masking and categorizing operations into linear and non-linear types, we effectively address the privacy concerns associated with previous methods.
The streamlined process for non-linear operations, facilitated by a two-trapdoor FHE based on the RLWE homomorphic encryption scheme, minimizes the complexity and communication overhead typically required.
This innovative approach not only ensures the correctness of the computations but also bolsters resistance to privacy attacks, setting a new benchmark for secure computation methods.
Ultimately, our advancements position our scheme as a more efficient and secure alternative to existing solutions like \nameScheme.

\subsection{New Secure Computation Methods}\label{ssec:NSCM}

Our new secure computation methods consist of an enhanced secure normalization judgment and an enhanced secure cosine similarity measurement.
We construct a two-trapdoor FHE scheme to protect the privacy of these two methods.
The two-trapdoor FHE is based on RLWE (Ring Learning with Errors) homomorphic encryption (CKKS cryptosystem \cite{cheon2017homomorphic}), mainly containing the following algorithms.

\begin{fig}[!ht]
\small
Given a ciphertext $[\![g_{i}]\!]=\{[\![g_{i,1}]\!],\cdots,[\![g_{i,\tau}]\!]\}$ of a local gradient $g_i$,
where $\tau=\lceil l/n_f\rceil$ and $g_{i,j}=\{x_{i,j,1},\cdots,x_{i,j,n_f}\}$ for $j\in[1,\tau]$,
$S_1$ interacts with $S_2$ to judge whether the local gradient $g_i$ is normalized:
\begin{itemize}[leftmargin=0.4cm]
  \item[$\bullet$] $@S_1$: 
  $S_1$ first calculates the component-wise second power of $[\![g_{i}]\!]$ as $[\![g_{i}^{2}]\!]=\{[\![g^2_{i,1}]\!],\cdots,[\![g^2_{i,\tau}]\!]\}$, where $[\![g^2_{i,j}]\!]\leftarrow\mathbf{FHE.Mult}_{evk}([\![g_{i,j}]\!],[\![g_{i,j}]\!])$ for $j\in[1,\tau]$.
  To protect gradient privacy, $S_1$ selects $\tau$ random noise polynomials $r_{i,j}=\{r_{i,j,1},\cdots,r_{i,j,n_f}\}\leftarrow\mathcal{R}_q$ for ${j\in[1,\tau]}$ and masks $[\![g_{i,j}^{2}]\!]$ with $r_{i,j}$ as $[\![\bar{g}_{i,j}^{2}]\!]=[\![g_{i,j}^{2}+r_{i,j}]\!]\leftarrow\mathbf{FHE.Add}([\![g_{i,j}^{2}]\!],[\![r_{i,j}]\!]).$
  Then, $S_1$ executes $[\bar{g}_{i,j}^{2}]_1\leftarrow \mathbf{FHE.PartDec}_{sk_1}([\![\bar{g}_{i,j}^{2}]\!])$ for $j\in[1,\tau]$ and sends
  $\{[\bar{g}_{i,j}^{2}]_1,[\![\bar{g}_{i,j}^{2}]\!]\}_{j\in[1,\tau]}$ to $S_2$.

\item[$\bullet$] $@S_2$: After obtaining the above ciphertexts, 
$S_2$  obtains the masked polynomials $\bar{g}_{i}^2=\{\bar{g}_{i,1}^2,\cdots,\bar{g}_{i,\tau}^2\}$ 
by executing $[\bar{g}_{i,j}^{2}]_2\leftarrow \mathbf{FHE.PartDec}_{sk_2}([\![\bar{g}_{i,j}^{2}]\!])$ and $\bar{g}_{i,j}^{2}\leftarrow\mathbf{FHE.FullDec}([\![\bar{g}_{i,j}^{2}]\!],[\bar{g}_{i,j}^{2}]_1,[\bar{g}_{i,j}^{2}]_2)$ for $j\in[1,\tau]$.
According to the multiplicative homomorphic properties of the two-trapdoor FHE, it is held that $\bar{g}_{i,j}^{2}=\{\bar{x}^2_{i,j,1},\cdots,\bar{x}^2_{i,j,n_f}\}$, where $\bar{x}^2_{i,j,k}=x^2_{i,j,k}+r_{j,k}$ for $k\in[1,n_f]$.
Then, $S_2$ returns $\overline{sum}=\sum_{j=1}^{\tau}\sum_{k=1}^{n_f}\bar{x}_{i,j,k}^2$ to $S_1$.

\item[$\bullet$] $@S_1$:  On receiving $\overline{sum}$,
$S_1$ removes the random noises in $\overline{sum}$  by executing $sum=\overline{sum}-\sum_{j=1}^{\tau}\sum_{k=1}^{n_f}r_{i,j,k}.$ 
Then $S_1$ determines whether $sum=1$.
If yes, it means that $g_i$ is normalized; otherwise not.

\end{itemize}
 \setcounter{figure}{\value{fig}}
 \caption{Implementation procedure of the enhanced secure normalization judgement method $\mathbf{ESecJudge}([\![g_{i}]\!])$.}
 \label{fig:NewSecJudge}
\end{fig}

\begin{itemize}[leftmargin=0.3cm]
  \item $\mathbf{FHE.Setup}(1^{\varepsilon'})\rightarrow PP$: Given the security parameter $\varepsilon'$, the public parameter $PP=\{n_f, p, q,P,\chi_s,\chi_e\}$ is output, where $n_f$ is a ring dimension, $p$ is a plaintext modulus, $q=q(\varepsilon')$ is a ciphertext modulus, $P$ is a special modulus, and $\chi_s,\chi_e$ are two probability distributions on the ring $\mathcal{R}$ with small bounds.

  \item 
  $\mathbf{FHE.KeyGen}(PP) \rightarrow (sk, pk,evk)$:
  Given the public parameter $PP$, a secret key $sk\leftarrow\chi_s$, a public key $pk\leftarrow(-a\cdot sk+e \bmod q, a)\in \mathcal{R}_q^2$, and an evaluation key $evk\leftarrow(-a'\cdot sk +e'+P\cdot sk^2 \bmod P\cdot q, a')\in \mathcal{R}_{P\cdot q}^2$ are output, where $a,a'\leftarrow \mathcal{R}_q:=\mathbb{Z}_q/(X^{n_f}+1)$ and $e,e'\leftarrow\chi_e$.

  \item $\mathbf{FHE.KeySplit}(sk) \rightarrow (sk_1,sk_2)$:
  Given the secret key $sk$, two secret key shares $sk_1,sk_2$ are generated by randomly dividing $sk$ meeting $sk=sk_1+sk_2\bmod q$.

  \item $\mathbf{FHE.Enc}_{pk}(m) \rightarrow [\![m]\!]$:
  This algorithm first encodes a float vector $m\in\mathbb{R}^{n_f}$ with length $n_f$ into an integral coefficient polynomials $m(X)\in\mathcal{R}_p$ and then it encrypts $m(X)$ and outputs the ciphertext $[\![m]\!]$.
      Specifically, the input vector $m$ is first converted to a complex vector $z=(z_1,\cdots,z_{n_f/2})\in\mathbb{C}^{n_f/2}$, in which every two elements in $m$ form the real and imaginary parts of a complex number.
      Then, the complex vector $z$ is encoded as $m(X)=\lfloor\Delta\cdot \phi^{-1}(z)\rceil$, where $\Delta>1$ is a scaling factor and $\phi: R[X]/(X^{n_f}+1)\rightarrow\mathbb{C}^{n_f/2}$ is a ring isomorphism.
      Finally, the ciphertext $[\![m]\!]=(m+u\cdot pk_0+e_0 \bmod q, u\cdot pk_1+e_1 \bmod q)\triangleq (c_0,c_1)\in\mathcal{R}_q^2$ is output, where $u\leftarrow\chi_s$ and $e_0,e_1\leftarrow\chi_e$.

  \item $\mathbf{FHE.Add}([\![m_1]\!],[\![m_2]\!]) \rightarrow [\![m]\!]$: Given two ciphertexts $[\![m_1]\!],[\![m_2]\!]$,
  the sum $[\![m]\!]=[\![m_1]\!]+[\![m_2]\!]=[\![m_1+m_2]\!]$ of the two ciphertexts is output.

  \item $\mathbf{FHE.Mult}_{evk}([\![m_1]\!],[\![m_2]\!]) \rightarrow [\![m]\!]$: Given two ciphertexts $[\![m_1]\!]=(c_0,c_1),[\![m_2]\!]=(c'_0,c'_1)$,
  let $(d_0,d_1,d_2)=(c_0c'_0,c_0c'_1+c'_0c_1,c_1c'_1) \bmod q$, the product $[\![m]\!]=(d_0,d_1)+\lfloor P^{-1}\cdot d_2\cdot evk\rceil$ of the two ciphertexts is output.

  \item 
  $\mathbf{FHE.PartDec}_{sk_i}([\![m]\!]) \rightarrow [m]_i$: Given a ciphertext $[\![m]\!]=(c_0,c_1)$,
  a partial decryption result $[m]_i=c_1sk_i + e_i \bmod q$ is output, where $e_i\leftarrow\chi_e$.

  \item $\mathbf{FHE.FullDec}([\![m]\!],[m]_1,[m]_2) \rightarrow m$: Given a ciphertext $[\![m]\!]=(c_0,c_1)$ and its partial decryption result pair $([m]_1,[m]_2)$, the algorithm first fully decrypts the ciphertext $[\![m]\!]$ as a plaintext polynomial $m(X)=c_0+[m]_1+[m]_2 \bmod p\in\mathcal{R}_p$, then it decodes $m(X)$ by first computing the complex vector $z=\Delta^{-1}\cdot\phi(m(X))\in\mathbb{C}^{n_f/2}$ and then converting $z$ to a float vector $m\in\mathbb{R}^{n_f}$.
      Finally, the algorithm outputs $m$.
\end{itemize}

\begin{fig}[t]
\small
Given two normalized ciphertexts $[\![g_{a}]\!]=\{[\![g_{a,1}]\!],\cdots,[\![g_{a,\tau}]\!]\}$ and $[\![g_{b}]\!]=\{[\![g_{b,1}]\!],\cdots,[\![g_{b,\tau}]\!]\}$ of the gradients $g_a$ and $g_b$, respectively, where $g_{a,j}=\{x_{a,j,1},\cdots,x_{a,j,{n_f}}\}$ and $g_{b,j}=\{x_{b,j,1},\cdots,x_{b,j,{n_f}}\}$ for $j\in[1,\tau]$,
$S_1$ interacts with $S_2$ to obtain the cosine similarity $\cos_{ab}$ of $g_a$ and $g_b$: 
\begin{itemize}[leftmargin=0.4cm]
  \item[$\bullet$] $@S_1$: On receiving $[\![g_a]\!]$ and $[\![g_b]\!]$, $S_1$ calculates the product of the two ciphertexts as
  $[\![g_{c,j}]\!]=[\![g_{a,j}]\!]\times[\![g_{b,j}]\!]\leftarrow\mathbf{FHE.Mult}_{evk}([\![g_{a,j}]\!], [\![g_{b,j}]\!])$ for $j\in[1,\tau]$.
  To protect gradient privacy, $S_1$ selects $\tau$ random noise polynomials $r_j=\{r_{j,1},\cdots,r_{j,n_f}\}\leftarrow\mathcal{R}_q$ for ${j\in[1,\tau]}$ and masks $[\![g_{c,j}]\!]$ with $r_j$ as
  $[\![\bar{g}_{c,j}]\!]=[\![g_{c,j}+r_j]\!]\leftarrow\mathbf{FHE.Add}([\![g_{c,j}]\!],[\![r_j]\!]).$
  Then, $S_1$ calculates $[\![\bar{g}_{c,\Sigma}]\!]=[\![\sum_{j=1}^{\tau}\bar{g}_{c,j}]\!]\leftarrow\mathbf{FHE.Add}([\![\bar{g}_{c,1}]\!], \cdots,[\![\bar{g}_{c,\tau}]\!])$ and partially decrypts it as $[\bar{g}_{c,\Sigma}]_1\leftarrow \mathbf{FHE.PartDec}_{sk_1}([\![\bar{g}_{c,\Sigma}]\!])$.
  $S_1$ sends $[\bar{g}_{c,\Sigma}]_1,[\![\bar{g}_{c,\Sigma}]\!]$ to $S_2$.

\item[$\bullet$] $@S_2$:
After receiving $[\bar{g}_{c,\Sigma}]_1,[\![\bar{g}_{c,\Sigma}]\!]$,
$S_2$ obtains $\bar{g}_{c,\Sigma}=\{x_{c,j,1},\cdots,x_{c,j,n_f}\}$ by calling $\mathbf{FHE.PartDec}$ and $\mathbf{FHE.FullDec}$.
According to the additive and multiplicative homomorphic properties of the two-trapdoor FHE, it is held that
$x_{c,j,k}=\sum_{j=1}^{\tau}(x_{a,j,k}x_{b,j,k}+r_{j,k})$.
Then, $S_2$ calculates the masked cosine similarity $\overline{\cos}_{ab}=\sum_{k=1}^{n_f}\sum_{j=1}^{\tau}x_{c,j,k}=\sum_{k=1}^{n_f}\sum_{j=1}^{\tau}(x_{a,j,k}x_{b,j,k}+r_{j,k})$ and sends it to $S_1$.

\item[$\bullet$] $@S_1$:
On receiving $\overline{\cos}_{ab}$, $S_1$ removes the random noises in $\overline{\cos}_{ab}$ by executing ${\cos}_{ab}=\overline{\cos}_{ab}-\sum_{k=1}^{n_f}\sum_{j=1}^{\tau}r_{j,k}$.

\end{itemize}
 \caption{Implementation procedure of the enhanced secure cosine similarity measurement method $\mathbf{ESecCos}([\![g_{a}]\!],[\![g_{b}]\!])$.}
 \label{fig:NewSecCos}
\end{fig}

Based on the two-trapdoor FHE scheme, we design the enhanced secure normalization judgment method $\mathbf{ESecJudge}$ and the enhanced secure cosine similarity measurement method $\mathbf{ESecCos}$ as Fig.~\ref{fig:NewSecJudge} and Fig.~\ref{fig:NewSecCos}, respectively.

For $\mathbf{ESecJudge}$, on inputting a gradient ciphertext $[\![g_i]\!]=\{[\![g_{i,1}]\!],\cdots,[\![g_{i,\tau}]\!]\}$, $S_1$ interacts with $S_2$ to judge whether the local gradient $g_i$ is normalized, i.e., whether $[\![g_{i,1}^2]\!]+\cdots+[\![g_{i,\tau}^2]\!]=[\![1]\!]$.
In this ciphertext computation, the linear and non-linear operations correspond to additions and multiplications, respectively.
We leverage $S_1$ to perform multiplication operations locally. The resulting computations are then masked and sent to $S_2$ for the addition operations.

Specifically, $S_1$ first computes ciphertext multiplications $[\![g^2_{i,j}]\!]\leftarrow\mathbf{FHE.Mult}_{evk}([\![g_{i,j}]\!],[\![g_{i,j}]\!])$ for $j\in[1,\tau]$ to obtain
$[\![g_{i}^{2}]\!]=\{[\![g^2_{i,1}]\!],\cdots,[\![g^2_{i,\tau}]\!]\}$. It then masks the result $[\![g_{i}^{2}]\!]$ as
 $[\![\bar{g}_{i}^{2}]\!]=\{[\![\bar{g}^2_{i,1}]\!],\cdots,[\![\bar{g}^2_{i,\tau}]\!]\}$, where
$[\![\bar{g}_{i,j}^{2}]\!]=[\![g_{i,j}^{2}+r_{i,j}]\!]\leftarrow\mathbf{FHE.Add}([\![g_{i,j}^{2}]\!],[\![r_{i,j}]\!])$ and $r_{i,j}$ is a randomly selected noise polynomial.
$S_1$ partially decrypts $\{[\![\bar{g}_{i,j}^{2}]\!]\}_{j\in[1,\tau]}$ and sends $\{[\bar{g}_{i,j}^{2}]_1,[\![\bar{g}_{i,j}^{2}]\!]\}_{j\in[1,\tau]}$ to $S_2$.
After performing full decryption, $S_2$ obtains the masked gradient $\bar{g}_{i}^2=\{\bar{g}_{i,1}^2,\cdots,\bar{g}_{i,\tau}^2\}$ and adds all components of the masked gradient (polynomial) to obtain $\overline{sum}=\sum_{j=1}^{\tau}\sum_{k=1}^{n_f}\bar{x}_{i,j,k}^2$, where $\bar{g}_{i,j}^{2}=\{\bar{x}^2_{i,j,1},\cdots,\bar{x}^2_{i,j,n_f}\}$ and
$\bar{x}^2_{i,j,k}$ is the $k$th polynomial coefficient of $\bar{g}_{i,j}^{2}$.
$S_2$ sends $\overline{sum}$ to $S_1$, who then obtains $sum=\overline{sum}-\sum_{j=1}^{\tau}\sum_{k=1}^{n_f}r_{i,j,k}$ through removing noises.
Finally, if $sum=1$, $S_1$ determines that $g_i$ is normalized and accepts it; otherwise, $S_1$ discards the unnormalized $g_i$.
Fig.~\ref{fig:NewSecJudge} illustrates the detailed procedure of $\mathbf{ESecJudge}$.

For $\mathbf{ESecCos}$, on inputting two normalized gradient ciphertexts $[\![g_{a}]\!]=\{[\![g_{a,1}]\!],\cdots,[\![g_{a,\tau}]\!]\}$ and $[\![g_{b}]\!]=\{[\![g_{b,1}]\!],\cdots,[\![g_{b,\tau}]\!]\}$ with $\parallel g_a\parallel=1$ and $\parallel g_b\parallel=1$, $S_1$ interacts with $S_2$ to calculate the cosine similarity of $g_a$ and $g_b$ as $\cos_{ab}=\frac{g_a\odot g_b}{\parallel g_a\parallel\times\parallel g_b\parallel}=g_{a,1}\times g_{b,1}+\cdots+g_{a,\tau}\times g_{b,\tau}$.
In this ciphertext computation, the linear and non-linear operations correspond to additions and multiplications, respectively.
We leverage $S_1$ to perform multiplication operations locally. The resulting computations are then masked and sent to $S_2$ for the addition operations.

In particular, first, $S_1$ performs the ciphertext multiplication and masking to obtain $[\![\bar{g}_{c,\Sigma}]\!]=\sum_{j=1}^{\tau}[\![g_{a,j} \times g_{b,j}+r_j]\!]$, where $\{r_j\}_{j\in[1,\tau]}$ are randomly selected noise polynomials by $S_1$.
Then, $S_2$ obtains the plaintext $\bar{g}_{c,\Sigma}=\{x_{c,j,1},\cdots,x_{c,j,n_f}\}$ by jointly decrypting $[\![\bar{g}_{c,\Sigma}]\!]$ with $S_1$ and sends the masked cosine similarity $\overline{\cos}_{ab}=\sum_{k=1}^{n_f}\sum_{j=1}^{\tau}x_{c,j,k}$ to $S_1$, where $x_{c,j,k}=\sum_{j=1}^{\tau} (g_{a,j,k} \times g_{b,j,k}+r_{j,k})$.
Finally, $S_1$ obtains $\cos_{ab}$ through removing noises.
A more specified $\mathbf{ESecCos}$ procedure is described in Fig.~\ref{fig:NewSecCos}.

\subsection{Construction of Our PBFL Scheme}
In our enhanced scheme, a privacy-preserving defense method are devised to fortify against model poisoning attacks, in which
the new secure computation methods are utilized to protect the privacy while ensuring the correctness of the computations.
Besides, to motivate users to engage in benign training and accurately identify poisonous gradients, inspired by \cite{zhang2022fldetector}, we introduce user credit scores to our Byzantine-tolerance aggregation process of the privacy-preserving defense.
Specifically, the enhanced scheme contains four processes detailed below.

  1) \textit{Setup}: 
  In this process, $KC$ generates public parameters/keys for the scheme and $S_1$ generates an initial model for the users.
  Specifically, given a security parameter $\varepsilon'$, $KC$ calls $\mathbf{FHE.Setup}$ and $\mathbf{FHE.KeyGen}$ to generate the public parameter $PP$ and a tuple of key $(sk,pk,evk)$.
  Then, $KC$ calls $\mathbf{FHE.KeySplit}(sk)$ $n+1$ times to generate $n+1$ pairs of secret key shares $(sk_1,sk_2)$ and $\{(sk_{s_i},sk_{u_i})\}_{i\in[1,n]}$.
  Finally, $KC$ broadcasts $\{PP,pk,evk\}$, distributes $sk_1,\{sk_{s_i}\}_{i\in[1,n]}$ to $S_1$, distributes $sk_2$ to $S_2$, and distributes $sk_{u_i}$ to $U_i$.
  $S_1$ selects a random initial model $W^{(0)}$ and sends $W^{(0)}$ to $U_i$.

  2) \textit{Local Training}:
  In each training iteration, each user trains its local model to obtain a local gradient, encrypts the local gradient, and sends the encrypted gradient to $S_1$.
  Specifically, in the $t$th training iteration, each user $U_i$ trains its local model $W^{(t)}$ and flats its local gradient as a gradient vector $g_i^{(t)}\in\mathbb{R}^l$ whose length is $l$.
  The gradient vector is then normalized and uniformly divided into $\tau=\lceil l/n_f\rceil$ vectors $\{g_{i,j}^{(t)}\in\mathbb{R}^{n_f}\}_{j\in[1,\tau]}$ with length $n_f$.
  The elements 0s are padded at the end of $g_i^{(t)}$ when less than $n_f$ elements are left.
  Then, $U_i$ encrypts each vector $g_{i,j}^{(t)}$ as $[\![g_{i,j}^{(t)}]\!]$ by calling $\mathbf{FHE.Enc}$.
  Finally, the beginning user $U_i$ sends its encrypted gradient $[\![g_i^{(t)}]\!]=\{[\![g_{i,1}^{(t)}]\!],\cdots,[\![g_{i,\tau}^{(t)}]\!]\}$ to $S_1$ and the malicious user $U_i\in U^*$ may launch poisoning attacks and then send its encrypted poisonous gradients $[\![g_{i*}^{(t)}]\!]=\{[\![g_{i*,1}^{(t)}]\!],\cdots,[\![g_{i*,\tau}^{(t)}]\!]\}$ to $S_1$.

\textit{3) Privacy-Preserving Defense:}
This process mainly contains the following three implementations.

\begin{itemize}[leftmargin=1em]
  \item \textbf{Normalization Judgment.}
  The server $S_1$ determines whether the received encrypted local gradients $[\![g_i^{(t)}]\!] = \{[\![g_{i,1}^{(t)}]\!], \cdots, [\![g_{i,\tau}^{(t)}]\!]\}$ are regularized by invoking the verification function $\mathbf{ESecJudge}([\![g_i^{(t)}]\!])$. Only those passing the verification are considered \emph{trusted} and included in the index set $\mathcal{I}_\text{trusted}$.

  \item \textbf{Poisonous Gradient Baseline Identification.}
  For all trusted gradients $\{[\![g_i^{(t)}]\!]\}_{i \in \mathcal{I}_\text{trusted}}$, the server computes the cosine similarity with the previously aggregated global gradient $[\![g_\Sigma^{(t-1)}]\!]$ as
  $\cos_{i\Sigma} \leftarrow \mathbf{ESecCos}([\![g_i^{(t)}]\!], [\![g_\Sigma^{(t-1)}]\!]), \quad \text{for } i \in \mathcal{I}_\text{trusted}.$
  The gradient having the \emph{lowest} similarity score is selected as the poisonous baseline
  $[\![g_*^{(t)}]\!]$. 

  \item \textbf{Byzantine-Tolerant Aggregation.}
  To defend against adversarial behaviors while incentivizing honest participants, the server constructs a dual-scoring aggregation strategy using both confidence and credit scores.

  \begin{itemize}
    \item \textbf{Confidence Computation.}
    For each user $i \in \mathcal{I}_\text{trusted}$, the cosine similarity between their local gradient and the poisonous baseline is computed as
    $$\cos_{i*} \leftarrow -\ \mathbf{ESecCos}([\![g_i^{(t)}]\!], [\![g_*^{(t)}]\!]).$$
    These similarity values are then normalized to obtain confidence scores:
    \[
      co_i \leftarrow \frac{\exp(\cos_{i*} - \max_j \cos_{j*})}{\sum_{j \in \mathcal{I}_\text{trusted}} \exp(\cos_{j*} - \max_k \cos_{k*})}.
    \]

    \item \textbf{Credit Score Update.}
    Each participant¡¯s credit score $cs_i$ is updated with exponential smoothing
    \[
      cs_i \leftarrow \alpha \cdot cs_i + (1 - \alpha) \cdot co_i,
    \]
    where $\alpha \in [0.7, 0.95]$ is a decay coefficient to balance historical consistency and current behavior.

    \item \textbf{Client Aggregation Weights.}
    The clients' aggregation weights are computed based on the product of credit and confidence scores
    \[
      w_i \leftarrow \frac{cs_i \cdot co_i}{\sum_{j \in \mathcal{I}_\text{trusted}} cs_j \cdot co_j}.
    \]

    \item \textbf{Adaptive Client Filtering.}
    To further enhance robustness against adversarial manipulation and reduce the influence of anomalous or overconfident updates, the server performs an optional three-stage client filtering strategy before final aggregation:
    \begin{itemize}
    \item[(i)] \textit{Adaptive Weight Mixture.} 
    The server first interpolates between uniform weighting and the learned robust weights based on the current round index $t$, using a decaying mixture coefficient:
                \[
                  \lambda \leftarrow \max\left(0, 1 - \frac{t - T_\text{warmup}}{T_\text{total} - T_\text{warmup}}\right),
                \]
                \[
                  \tilde{w}_i \leftarrow \lambda \cdot \frac{1}{n} + (1 - \lambda) \cdot w_i,
                \]
                where $\tilde{w}_i$ denotes the adjusted aggregation weight for client $i$, and $n$ is the total number of clients. This allows the server to rely more on uniform weighting in early rounds and gradually transition to confidence-driven weighting as training progresses. Clients with $\tilde{w}_i < \theta$ (e.g., $\theta = 0.05$) are excluded from the aggregation.

   \item[(ii)]  \textit{Weight-Based Outlier Suppression.}
    To mitigate the impact of disproportionately dominant updates, the server identifies the top-$\frac{n}{2}$ largest weights $\{w_i\}_{i \in \mathcal{I}_\text{trusted}}$, sorts them in descending order, and inspects adjacent weight differences. If a sharp drop is detected (i.e., $w_i - w_{i+1} > \delta$, with $\delta = \frac{1}{2n}$), clients ranked above the drop are discarded. This ensures fairness and guards against potential gradient manipulation. The remaining clients form the final participant set $\mathcal{I}_\text{select}$.

    \item[(iii)] \textit{Credit Penalty for Filtered Clients.}
    To enhance long-term robustness, clients filtered out during either the \textit{Normalization Judgment} or \textit{Outlier Suppression} stages are penalized by applying differentiated scaling factors to their credit scores. Specifically, a multiplicative decay $\gamma_1 \in [0, 1]$ is applied to clients excluded by normalization, while an inflation factor $\gamma_2 > 1$ is used for clients removed during outlier suppression. This adjustment modifies each client's trust trajectory in future rounds accordingly.
    \end{itemize}
  \end{itemize}

    Finally, the weighted aggregation of encrypted gradients is computed as
    \[
    [\![g_\Sigma^{(t)}]\!] \leftarrow \sum_{i \in \mathcal{I}_\text{select}} \frac{w_i}{\sum_{j \in \mathcal{I}_\text{select}} w_j} \cdot [\![g_i^{(t)}]\!].
    \]
\end{itemize}

\textit{4) Model Update:}
The server $S_1$ partially decrypts and sends the share $[g_\Sigma^{(t)}]_{s_i} \leftarrow \mathbf{FHE.PartDec}_{sk_{s_i}}([\![g_\Sigma^{(t)}]\!])$ to each participant $U_i$.
Each user completes decryption using its private key $sk_{u_i}$
\[
  g_\Sigma^{(t)} \leftarrow \mathbf{FHE.FullDec}(\{[g_\Sigma^{(t)}]_{s_i}\}, sk_{u_i}),
\]
and updates its local model using
\[
  W^{(t+1)} \leftarrow W^{(t)} - \eta^{(t)} \cdot g_\Sigma^{(t)},
\]
where $\eta^{(t)}$ is the learning rate of round $t$.

Our proposed privacy-preserving defense is designed to be robust against both conventional data poisoning attacks (as examined in \cite{ma2022shieldfl}) and more sophisticated adaptive model poisoning attacks, including the AGR-tailored and AGR-agnostic strategies introduced in \cite{shejwalkar2021manipulating}. To counter these evolving threats, our defense integrates several key mechanisms that jointly enhance resilience:

\begin{itemize}[leftmargin=0.3cm]
\item \textbf{Dual-Scoring Trust Assessment.}
A two-tier scoring system is employed, combining per-round cosine-based \emph{confidence scores} and long-term \emph{credit scores}. This joint quantification reflects both short-term gradient consistency and historical client reliability, effectively preventing adaptive adversaries from persistently imitating benign behaviors to evade detection.

\item \textbf{Exponential Credit Score Smoothing.}
Credit scores are updated via exponential smoothing with decay factor $\alpha \in [0.7, 0.95]$, enabling the system to gradually penalize clients with recurrent suspicious updates while remaining tolerant to occasional noise in benign gradients.

\item \textbf{Soft Confidence Normalization.}
Instead of applying hard thresholding, we utilize softmax-based normalization of negative cosine similarities to compute smooth confidence scores. This soft treatment maintains sensitivity to fine-grained deviations and provides robustness against gradient-level perturbations crafted by adaptive attackers.

\item \textbf{Adaptive Weight Mixing and Thresholding.} An adaptive mixing scheme transitions from uniform weighting to fully trust-weighted aggregation as training progresses, enhancing early-round stability. Clients with extremely low trust weights (below threshold $\theta$) are excluded from aggregation to reduce stealth attack surfaces.

\item \textbf{Outlier Suppression via Drop Detection.}
Our outlier suppression step explicitly targets highly confident yet anomalously extreme updates, which are likely crafted to exploit the statistical patterns of benign gradients for malicious purposes.
This mechanism is particularly critical for defending against adaptive adversaries, such as those employing AGR-tailored strategies. In such attacks, malicious gradients are deliberately constructed based on the distributional characteristics of benign updates, typically by adding a scaled deviation vector to their mean. These adversarial updates often evade baseline filters due to their close proximity to the benign gradient space. Furthermore, they may achieve abnormally high confidence scores by aligning even more closely with the directionality of benign updates than the benign clients themselves.

\item \textbf{Credit Penalty for Filtered Clients.}
    This mechanism addresses the risk of adaptive adversaries regaining trust too easily after being filtered. Specifically, clients removed during the \textit{Normalization Judgment} step are presumed to have produced non-normalized or low-quality gradients and thus receive a credit decay to reduce their future influence. In contrast, clients filtered via the \textit{Outlier Suppression} mechanism may have presented gradients that appear highly confident yet abnormally extreme. These gradients often arise from adversarial strategies that construct malicious updates by scaling deviations from the benign gradient mean, making them closely aligned with benign directionality and capable of evading baseline checks. Since such clients may initially appear trustworthy, a reverse penalty (i.e., increasing their credit score using $\gamma_2 > 1$) makes them more likely to be flagged again in subsequent outlier filtering, reinforcing detection consistency across rounds.
\end{itemize}

\section{Theoretical Analysis}\label{sec:theorem}
In this section, we delve into an analysis of both the security and complexity of the enhanced scheme.

\subsection{Security Analysis}\label{ssec:secProof}

We prove the security of the enhanced scheme based on the security Definition~\ref{def:new_security}.
\begin{definition}\label{def:new_security}
  Considering a game between an adversary $\mathcal{A}$ and a PPT simulator $\mathcal{S}$,
  if for any real view $\texttt{REAL}$ of $\mathcal{A}$ and the ideal view \texttt{IDEAL} of $\mathcal{S}$, it holds that
  \begin{equation*}
  \scriptsize
    \texttt{REAL}_{\mathcal{A}}^{\prod}(\{g_i^{(t)}\}_{i\in[1,n]},sum, \cos)  \overset{\textit{c}}{\equiv} \texttt{IDEAL}_{\mathcal{S}}^{\prod}(\{g_i^{(t)}\}_{i\in[n]},sum, \cos),
  \end{equation*}
  then the protocol $\prod$ is secure. $\{g_i^{(t)}\}_{i\in[1,n]}$ represents all $n$ users' submitted gradients in the $t$th training iteration.
\end{definition}

\begin{theorem}
  The enhanced scheme can ensure data confidentiality of its privacy-preserving defense process between the two non-colluding and honest-but-curious servers $S_1$ and $S_2$ over all $n$ users' encrypted local gradients against a semi-honest adversary $\mathcal{A}$.
\end{theorem}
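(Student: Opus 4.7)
The plan is to follow the simulation paradigm of Definition~\ref{def:new_security} and prove security of the privacy-preserving defense by a standard hybrid argument. Since the two servers are non-colluding and honest-but-curious, I would separately treat two corruption cases: (i) $\mathcal{A}$ corrupts $S_1$, and (ii) $\mathcal{A}$ corrupts $S_2$. For each case I would construct a PPT simulator $\mathcal{S}$ that, given only the corrupted party's input/output (the gradient ciphertexts it legitimately receives together with $sum$ or $\cos_{ab}$), produces a view that is computationally indistinguishable from its real view. Because the overall defense is a sequential composition of $\mathbf{ESecJudge}$ and $\mathbf{ESecCos}$ invocations followed by the purely local Byzantine-tolerance aggregation on ciphertexts at $S_1$, by the composition theorem for semi-honest protocols it suffices to prove each subprotocol secure in isolation and then glue the simulators together across iterations and users.

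For $\mathbf{ESecJudge}([\![g_i]\!])$, when $S_1$ is corrupted, its view consists of the users' incoming ciphertexts, the masks $\{r_{i,j}\}$ it generated itself, and the single scalar $\overline{sum}$ sent back by $S_2$. The simulator $\mathcal{S}_{S_1}$, on input $sum$, samples fresh masks and computes $\overline{sum}\leftarrow sum+\sum_{j,k} r_{i,j,k}$; indistinguishability from the real view then reduces to IND-CPA security of the two-trapdoor FHE on the users' gradient ciphertexts (which $S_1$ never decrypts). When $S_2$ is corrupted, its view consists of the masked ciphertexts $[\![\bar{g}_{i,j}^{2}]\!]$ together with the $S_1$-side partial decryption shares $[\bar{g}_{i,j}^{2}]_1$. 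The simulator $\mathcal{S}_{S_2}$ samples uniform $\tilde{g}_{i,j}\leftarrow\mathcal{R}_q$, encrypts them under $pk$, and simulates $[\bar{g}_{i,j}^{2}]_1$ using the standard threshold-decryption simulator (i.e., sampling a fresh $\chi_e$-noise term consistent with the ciphertext and the known decryption target). Indistinguishability follows because (a) the mask $r_{i,j}\leftarrow\mathcal{R}_q$ hides $g_{i,j}^{2}$ in the ring, and (b) partial decryption shares under $sk_1$ are simulatable given only $sk_2$-ignorance, which is guaranteed by non-collusion.

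The argument for $\mathbf{ESecCos}([\![g_a]\!],[\![g_b]\!])$ is analogous: the corrupted $S_1$ is simulatable from $\cos_{ab}$ alone by choosing fresh masks $\{r_j\}$ and letting $\overline{\cos}_{ab}\leftarrow\cos_{ab}+\sum_{j,k} r_{j,k}$, while the corrupted $S_2$ observes only the single aggregate ciphertext $[\![\bar{g}_{c,\Sigma}]\!]$ and its $S_1$-side share, whose underlying plaintext is masked by $\sum_j r_j$ and is therefore independent of $(g_a,g_b)$. Composing these subprotocol simulators across all $n$ users, both rounds of $\mathbf{ESecCos}$ used in poisonous-baseline finding and Byzantine-tolerance aggregation, and observing that the subsequent steps (computing $co_i$, $cs_i$, and the ciphertext aggregation $[\![g_\Sigma^{(t)}]\!]\leftarrow\sum_i[\![co_i]\!][\![g_i^{(t)}]\!]$) are performed non-interactively at $S_1$ on quantities already available in the ideal world, yields a simulator for the entire privacy-preserving defense process. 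The users' other-user privacy follows immediately since users never see any ciphertext beyond $[\![g_\Sigma^{(t)}]\!]$ and its share $[g_\Sigma^{(t)}]_{s_i}$.

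The main obstacle will be making the uniform-masking step fully rigorous in the CKKS/RLWE setting: CKKS is an \emph{approximate} scheme, so the masked plaintext $g_{i,j}^{2}+r_{i,j}$ is not literally uniform on the encoded slot space unless the mask is drawn from a distribution over $\mathcal{R}_q$ that statistically drowns both the plaintext and the accumulated decoding/rescaling error. I would address this by sampling $r_{i,j}$ uniformly in $\mathcal{R}_q$ (as specified in Fig.~\ref{fig:NewSecJudge}), invoking a noise-smudging lemma to bound the statistical distance between the real and simulated masked plaintexts, and folding any residual bias into the fresh $\chi_e$-term added inside $\mathbf{FHE.PartDec}$, so that the combined view remains computationally indistinguishable under the RLWE hardness underlying the two-trapdoor FHE. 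A secondary subtlety is demonstrating that the $n+1$ independent key splits output by $\mathbf{FHE.KeySplit}$ do not correlate the server's and users' partial decryption shares across iterations; this can be handled by a direct hybrid over the splits.
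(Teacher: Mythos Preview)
Your proposal is correct and in fact more carefully structured than the paper's own proof, but the decomposition differs. The paper gives a single linear sequence of hybrids $\mathbf{Hyb_0}$--$\mathbf{Hyb_5}$ applied to the entire privacy-preserving defense at once, successively replacing the users' ciphertexts, the homomorphic products, and the masked intermediate values by random quantities, without explicitly splitting into the $S_1$-corrupted and $S_2$-corrupted cases. You instead (i) separate the two corruption cases, (ii) build per-subprotocol simulators for $\mathbf{ESecJudge}$ and $\mathbf{ESecCos}$, and (iii) invoke the semi-honest composition theorem to glue them across users and iterations. Your route is the more standard MPC presentation and makes the role of non-collusion and of the ideal-world outputs $(sum,\cos_{ab})$ clearer; it also surfaces two issues the paper's proof glosses over, namely the need for a smudging/flooding argument because CKKS is approximate and $r_{i,j}\in\mathcal{R}_q$ must statistically drown both the plaintext and the decoding error, and the independence of the $n{+}1$ key splits. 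The paper's monolithic hybrid is shorter but less modular and does not isolate which server's view each step is arguing about. Either approach suffices for the theorem; yours would be easier to audit.
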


\begin{proof}
  We use a standard hybrid argument to prove it.
  For any server $S_1$ and $S_2$, $\texttt{REAL}_{\mathcal{A}}^{\prod}$ includes all users' encrypted gradients $\{[\![g_i]\!]\}_{i\in[1,n]}$ and intermediate parameters $sum$ and $\cos$ of the  \textit{Privacy-Preserving Defense} protocol $\prod$.
  A PPT simulator $\mathcal{S}$ analyzes each implementation procedure in the protocol, 
  i.e., $\mathcal{S}$ comprises the implementations of normalization judgment, poisonous gradient baseline finding, and Byzantine-tolerance aggregation.
  During these procedures, $\mathcal{S}$ makes a series of (polynomially many) subsequent modifications to the random variable \texttt{REAL}, and we prove that any two subsequent random variables are computationally indistinguishable.
  The detailed proof is given as follows.

  $\mathbf{Hyb_0}$
  This random variable is initialized to be indistinguishable from $\texttt{REAL}_{\mathcal{A}}^{\prod}$ in a real execution process of the protocol.

  $\mathbf{Hyb_1}$ In this hybrid, the behavior of all simulated users $\{U_i\}_{i\in[1,n]}$ are changed that each user $U_i$ encrypts its chosen random gradient vector $\tilde{g}_i^{(t)}\leftarrow \chi_g$ with a public key $pk$ randomly selected by $\mathcal{S}$ to replace the original encrypted gradient vector $[\![g_i^{(t)}]\!]$.
   Since the two-trapdoor FHE is secure under semi-honest adversaries, the two non-colluding honest-but-curious servers cannot distinguish the views between $\{[\![\tilde{g}_i^{(t)}]\!]\}_{i\in[1,n]}$ and $\{[\![g_i^{(t)}]\!]\}_{i\in[1,n]}$.

   $\mathbf{Hyb_2}$ In this hybrid, the product of two ciphertexts $[\![g_i^2]\!]$ in $\mathbf{ESecJudge}$ and $[\![g_a]\!]\cdot[\![g_b]\!]$ in $\mathbf{ESecCos}$ are substituted with random variables $[\![\chi_g^2]\!]$ and $[\![\chi_g]\!]\cdot[\![\chi_g]\!]$, respectively.
   The security of two-trapdoor FHE and the property of multiplicative homomorphism ensure that this hybrid is indistinguishable from the previous one.

  $\mathbf{Hyb_3}$ In this hybrid, all masked gradients in the privacy-preserving defense process are substituted.
  Specifically, we substitute the masked gradients $[\![g_{i,j}^2+r_j]\!]_{i\in[1,n],j\in[1,\tau]}$ during the executions of $\mathbf{ESecJudge}([\![g_{i}]\!])_{i\in[1,n]}$ in the normalization judgment procedure (resp. the masked gradients $[\![g_{i,j}^{(t)}\cdot g_{\Sigma,j}^{(t-1)}+r_j]\!]_{i\in[1,n],j\in[1,\tau]}$ during the executions of $\mathbf{ESecCos}([\![g_i^{(t)}]\!],[\![g_{\Sigma}^{(t-1)}]\!])_{i\in[1,n]}$ in the poisonous gradient baseline finding procedure and the masked gradients $[\![g_{i,j}^{(t)}\cdot g_{*,j}^{(t)}+r_j]\!]_{i\in[1,n],j\in[1,\tau]}$ during the executions of $\mathbf{ESecCos}([\![g_i^{(t)}]\!],[\![g_{*}^{(t)}]\!])_{i\in[1,n]}$ in the Byzantine-tolerance aggregation procedure)  with random variables $[\![\chi_{i,j}]\!]_{i\in[1,n],j\in[1,\tau]}$.
  The semi-honest security of the two-trapdoor FHE and the non-colluding setting between the two servers guarantee that this hybrid is indistinguishable from the previous one.

  $\mathbf{Hyb_4}$ In this hybrid, the product of two ciphertexts $[\![g_i^2]\!]$ in $\mathbf{ESecJudge}$ and $[\![g_a]\!]\cdot[\![g_b]\!]$ in $\mathbf{ESecCos}$ are separately masked by random polynomial $r_i\leftarrow\mathcal{R}_p$, which can be perfectly simulated by $\mathcal{S}$. 
  The intermediate parameters, also masked by uniformly random polynomials,  maintain a uniform distribution.
  As addition homomorphism ensures that the distribution of any intermediate variables remains consistent, $\mathcal{A}$ cannot distinguish the output of $\mathcal{S}$ in polynomial time, thus ensuring that this hybrid is identically distributed to the previous one.

  $\mathbf{Hyb_5}$ In this hybrid, the aggregated gradient $[\![g_{\Sigma}^{(t)}]\!]$ in the Byzantine-tolerance aggregation procedure is calculated using addition homomorphism. The real view of $\mathcal{S}$ is $\texttt{Ideal}_{\mathcal{S}}^{\prod}=\{\{g_i{(t)}\}_{i\in[1,n]},$ $sum,\cos,[\![g_{\Sigma}^{(t)}]\!]\}$, where $sum$ and $\cos$ are sets comprising the sums and cosine similarities obtained when executing $\{\mathbf{ESecJudge}([\![g_i^{(t)}]\!])\}_{i\in[1,n]}$ and $\{\mathbf{ESecCos}([\![g_i^{(t)}]\!],[\![g_{\Sigma}^{(t-1)}]\!])\}_{i\in[1,n]}$, respectively. Since the inputs $\{[\![g_i^{(t)}]\!]\}_{i\in[1,n]}$ are all encrypted, and the intermediate results are masked by random values, the non-colluding servers involved in calculating $sum$ and $\cos$ cannot compromise participants' privacy without access to any information about the inputs and intermediate results. Hence, this hybrid is indistinguishable from the previous one.

  In sum up, we can deduce from the above argument that $\texttt{REAL}_{\mathcal{A}}^{\prod}(\{g_i^{(t)}\}_{i\in[1,n]},sum, \cos)  \overset{\textit{c}}{\equiv} \texttt{IDEAL}_{\mathcal{S}}^{\prod}(\{g_i^{(t)}\}_{i\in[n]},$ $sum, \cos)$ holds, which indicates that our enhanced scheme can ensure data confidentiality.

\end{proof}

\subsection{Convergence Analysis}
We provide a detailed theoretical convergence analysis of the proposed PBFL scheme in Appendix~C. Specifically, we decompose the aggregation error into three components---(i) residual Byzantine bias, (ii) stochastic gradient noise, and (iii) gradient diversity---arising respectively from malicious updates, local sampling noise, and data heterogeneity. We then formally show that, under standard smoothness and bounded-variance assumptions, the proposed algorithm achieves the same sub-linear convergence rate $\mathcal{O}(1/\sqrt{T})$ as classical FedAvg, up to an additive error term that is proportional to the Byzantine fraction $\beta$ and decreases as the number of trusted participants increases. This result theoretically confirms the robustness and efficiency of our defense mechanism.

\subsection{Performance Analysis}\label{ssec:performanceComp}
We provide a detailed performance comparison of computational and communication complexities between our proposed scheme and \nameScheme, which is presented in Appendix~D. The analysis covers all major phases including \textit{Setup}, \textit{Local Training}, \textit{Privacy-Preserving Defense}, and \textit{Model Update}.
Our results show that, due to ciphertext computation simplicity and cryptographic primitive efficiency, our scheme significantly outperforms \nameScheme in both computation and communication efficiency. Specifically:

\begin{itemize}[leftmargin=0.3cm]
    \item \textbf{Computational Complexity:} Our scheme reduces asymptotic encryption, decryption, and homomorphic operation costs by a factor related to the smaller ciphertext dimension (\(n_f q \ll N^2\)). This translates to a lower overhead in all training rounds.
    \item \textbf{Communication Cost:} Since homomorphic multiplication in our secure computation methods $\mathbf{ESecCos}$ and $\mathbf{ESecJudge}$ is executed locally without inter-party interaction, its communication cost is effectively zero. In contrast, \nameScheme incurs significant overhead due to interactive homomorphic multiplication in their secure computation protocols. 
\end{itemize}


\subsection{Discussion and Practical Considerations}
While our enhanced scheme achieves significant improvements in privacy and efficiency, several practical considerations remain for real-world FL deployments. This section discusses two key aspects: the feasibility of decentralizing the Key Center and the challenges of defending against adaptive and malicious adversaries.

\textbf{(1) Decentralizing the Key Center:}
Our current design assumes a trusted centralized Key Center for key management and distribution. This assumption, however, may not always hold in heterogeneous or cross-silo FL environments where participants belong to different administrative domains with varying trust levels. To mitigate this limitation, decentralized approaches such as threshold cryptography and Distributed Key Generation (DKG) protocols can be adopted. These techniques distribute key management responsibilities among multiple independent parties, thereby eliminating the single point of trust and reducing the risk of key compromise. In addition, blockchain-based solutions and secure multiparty computation (MPC) protocols could be integrated to provide transparent, auditable, and tamper-resistant key management. While these decentralized mechanisms can enhance robustness and trustworthiness, they inevitably introduce challenges such as increased communication overhead and coordination complexity, which require further study.

\textbf{(2) Defending Against Adaptive and Malicious Threats:}
Our security analysis primarily considers malicious clients and at most one corrupted (malicious) server, ensuring the privacy of honest clients' data under the assumption that the two servers do not collude. However, in real-world FL deployments, adversaries may behave both maliciously and adaptively, dynamically deviating from the protocol to launch sophisticated attacks such as adaptive model poisoning or adaptive inference.

We explicitly evaluate our framework under an \emph{adaptive model poisoning} threat, where the adversary dynamically adjusts its malicious gradients based on the gradients of benign clients before injecting them into the training process. This adaptive strategy makes detection substantially more difficult. Our experimental results demonstrate that the proposed framework can effectively mitigate this class of adaptive attacks. Beyond model poisoning, more advanced adaptive threats have been reported, including adaptive backdoor attacks that iteratively modify backdoor triggers to evade defenses and adaptive inference attacks that progressively extract private information from model updates. These attacks evolve over training rounds and present challenges that our current design does not fully address.

Future work will focus on strengthening the framework against these stronger adversarial models by extending robustness beyond poisoning to cover adaptive backdoor and inference-based attacks. Potential directions include developing adaptive anomaly detection frameworks that dynamically adjust detection thresholds or defense strategies based on adversarial behaviors observed during training (e.g., online learning-based detectors that evolve as attacks change) and incorporating temporal consistency checks and cross-round validation mechanisms to detect adversaries that gradually adapt their malicious updates to evade static defenses. Furthermore, leveraging ensemble or diversity-based defenses (where multiple defense mechanisms operate in parallel and dynamically reconfigure based on the detected threat level) could make it substantially harder for adaptive attackers to circumvent all defenses simultaneously. Beyond adaptive attacks, we also plan to address the threat of malicious servers that compromise computation correctness by integrating verifiable computation or zero-knowledge proofs to guarantee both privacy and correctness even when a server is malicious.

\section{Conclusion}\label{sec:conclusion}

In this paper, we identify security weaknesses in privacy-preserving and Byzantine-robust federated learning (PBFL) schemes with the two-server model and proposes an enhanced PBFL solution.
In our enhanced scheme, a Byzantine-tolerant aggregation method is devised to fortify against model poisoning attacks. Additionally, we develop the enhanced secure normalization judgment method and secure cosine similarity measurement method to protect the privacy of the defense process while ensuring the correctness of the computations.
Our scheme guarantees privacy preservation and resilience against model poisoning attacks, even in scenarios with heterogeneous datasets that are non-IID (Independently Identically Distributed).
Moreover, it demonstrates superior efficiency in computational and communication performance compared to the existing state-of-the-art PBFL schemes.

  \bibliographystyle{IEEEtran}
  \bibliography{cited,IEEEabrv} 

\end{document}